\newcommand\textred[1]{\textcolor{red}{\textbf{#1}}}
\newcommand\textblue[1]{\textcolor{blue}{\textbf{#1}}}
\newtheorem{theorem}{\bf Theorem}
\newtheorem{definition}{\bf Definition}
\newtheorem{corollary}{\bf Corollary}
\newtheorem{proposition}[theorem]{Proposition}
\def\BibTeX{{\rm B\kern-.05em{\sc i\kern-.025em b}\kern-.08em
    T\kern-.1667em\lower.7ex\hbox{E}\kern-.125emX}}
\begin{document}
\title{Trustworthy Multi-phase Liver Tumor Segmentation via Evidence-based Uncertainty}

\author{Chuanfei Hu, Tianyi Xia, Ying Cui, Quchen Zou, Yuancheng Wang, Wenbo Xiao, Shenghong Ju, Xinde Li, \IEEEmembership{Senior Member, IEEE}
\thanks{
	\textit{(Chuanfei Hu, Tianyi Xia and Ying Cui contributed equally to this work.)
	(Corresponding author: Xinde Li and Shenghong Ju.)}
}
\thanks{
	Chuanfei Hu, Quchen Zou, and Xinde Li are with the Key Laboratory of Measurement and Control of CSE, 
	School of Automation, Southeast University, Nanjing 210096, China (e-mail: cfhu@seu.edu.cn; 220201775@seu.edu.cn; xindeli@seu.edu.cn).
}
\thanks{
	Wenbo Xiao is with the Department of Radiology, the First Affiliated Hospital, Zhejiang University School of Medicine, Hangzhou 310058, China (e-mail: xiaowenbo@zju.edu.cn).
}
\thanks{
	Tianyi Xia, Ying Cui, Yuancheng Wang, and Shenghong Ju are with the Jiangsu Key Laboratory of Molecular and Functional Imaging, 
	Department of Radiology, Zhongda Hospital, School of Medicine, Southeast University, Nanjing 210009, China (e-mail: 504255977@qq.com; cuiy\_seu@163.com; yuancheng\_wang@163.com; jsh@seu.edu.cn).
}
}

\maketitle

\begin{abstract}
Multi-phase liver contrast-enhanced computed tomography (CECT) images convey the complementary multi-phase information for liver tumor segmentation (LiTS), which are crucial to assist the diagnosis of liver cancer clinically. 
However, the performances of existing multi-phase liver tumor segmentation (MPLiTS)-based methods 
suffer from redundancy and weak interpretability, 
resulting in the implicit unreliability of clinical applications.
In this paper, we propose a novel trustworthy multi-phase liver tumor segmentation (TMPLiTS), 
which is a unified framework jointly conducting segmentation and uncertainty estimation. 
The trustworthy results could assist the clinicians to make a reliable diagnosis.
Specifically, Dempster-Shafer Evidence Theory (DST) is introduced to parameterize the segmentation and uncertainty as evidence following Dirichlet distribution.
The reliability of segmentation results among multi-phase CECT images is quantified explicitly. 
Meanwhile, a multi-expert mixture scheme (MEMS) is proposed to fuse the multi-phase evidences, 
which can guarantee the effect of fusion procedure based on theoretical analysis. 
Experimental results demonstrate the superiority of TMPLiTS compared with the state-of-the-art methods. 
Meanwhile, the robustness of TMPLiTS is verified, 
where the reliable performance can be guaranteed against the perturbations. 
\end{abstract}
\vspace{-2mm}

\begin{IEEEkeywords}
	Deep learning, evidence-based uncertainty estimation, multi-phase CT, multi-phase liver tumor segmentation, trustworthy medical segmentation.
\end{IEEEkeywords}

\vspace{-3mm}
\section{Introduction}
\label{sec:introduction}
\IEEEPARstart{L}{iver} cancer is one of the prevalent causes of cancer-induced death, 
resulting in a significant threat to human health. 
Computed tomography (CT) imaging is the major technique for liver-related imaging tests, 
while accurate liver tumor segmentation from CT volumes can benefit liver therapeutic schedule planning and conduct more reliable clinical applications, 
such as prognostic metric for hepatic surgical procedures~\cite{nakayama2006automated, entezari2022promoting}, 
determination of radiation dose in liver tumor radioembolization~\cite{spina2019expected} and survival prediction~\cite{assouline2023volumetric}.

Clinically, manual delineation via experts is still an inevitable processing for liver tumor segmentation, 
which is tedious and labor-intensive. The subject delineation might be vulnerable with the increase of workload, resulting in unexpected missed and false detection. 
Thus, various computer-aided methods based on image processing and machine learning have been proposed for liver tumor analysis, 
such as shape parameterization \cite{Linguraru2012Tumor}, level set model \cite{Li2013Likelihood}, and support vector machine \cite{Xian2010identification}.
Nevertheless, the diversity of liver tumor in terms of appearance and location leads to difficulties for these conventional computer-aided methods.
How to construct a computer-aided automatic liver tumor segmentation is still an open issue.

\begin{figure}[!t]
	\centering
	\includegraphics[width=0.9\columnwidth]{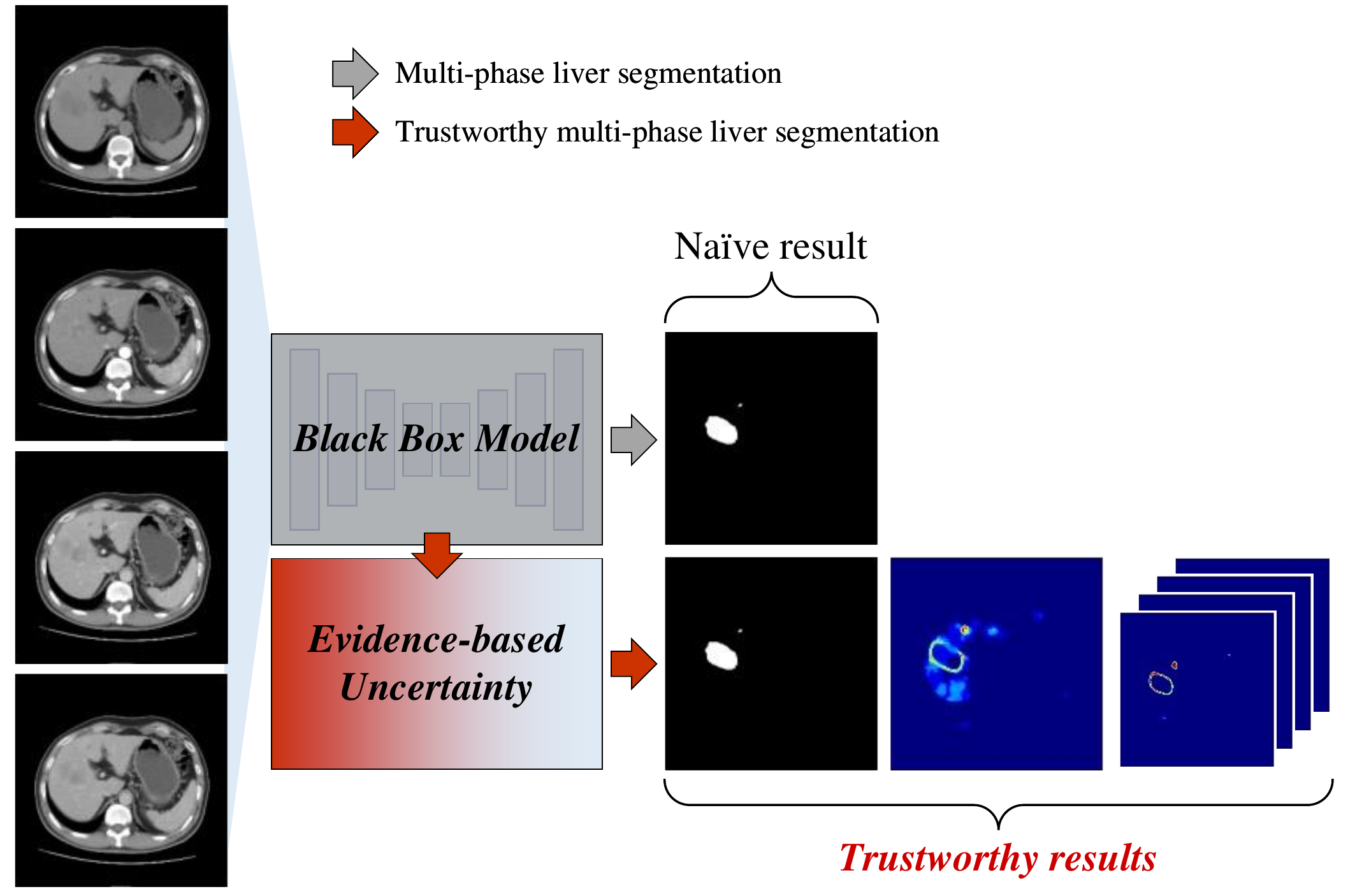}
	\caption{
		The insight of trustworthy multi-phase liver tumor segmentation (TMPLiTS). 
		Black box model might only provide the final segmentation result, 
		such na\"ive result could not interpret the reliability of model explicitly for multi-phase liver segmentation,
		resulting in the clinicians would make a diagnosis with slight hesitation.
		Inspired by~\cite{begoli2019need}, we argue that uncertainty estimation could reveal the reliability of black box model.
		Thus, a novel TMPLiTS is proposed, jointly achieving the segmentation and uncertainty estimation.
		Evidence-based uncertainty is introduced to quantify the uncertainty of prediction comprehensively,
		while trustworthy results could assist the clinicians to make a reliable diagnosis.
	} \label{fig:story}
	\vspace{-3mm}
\end{figure}

Recently, with the success of deep learning, many great efforts have be made to liver tumor segmentation, 
which can be categorized as single-phase-based~\cite{zhang2019light, lei2021defed, Di2022TD, lyu2022learning, zou2022graph} and multi-phase-based methods~\cite{vogt2020segmenting, zhang2021multi, zhang2021modality, zheng2022automatic}.
Single-phase-based methods construct deep learning model to segment liver tumors based on single-phase CT images.
However, the performances of these methods might not be satisfactory clinically due to the limited capability of single-phase CT imaging.
Compared with single-phase-based, 
multi-phase-based methods utilize the complementary information among different phases via contrast enhanced CT (CECT) imaging~\cite{chernyak2018liver},
which can capture the precise appearances of liver tumors. 
Specifically, feature-level fusion is the popular strategy for multi-phase-based methods to exploit complementary information among cross-phase features.
The complicated fusion modules are designed to bridge complementary relationships in terms of channel-wise and phase-wise features.
However, the performances of multi-phase liver tumor segmentation (MPLiTS) methods suffer from two limitations. 
\emph{First}, these methods empirically aggregate information in the different dimensions of features and ignore the theoretical analysis,
resulting in redundancy and low efficiency of fusion structures.
\emph{Second}, the reliability of liver tumor segmentation on multi-phase CECT images might not be explicitly described.
Since clinicians often not only focus on precise segmentation results, 
but also want to know the reliability of each phase for the final results,
which is in line with the current trend in the medical image analysis community to build trustworthy, explainable, and reliable artificial intelligence (AI) model \cite{kundu2021ai, chen2022explainable, Gu2021CA, Liao2022Learning, Mao2022ImageGCN, Major2023On}.
Therefore, a challenging issue remains:

\setlength{\fboxsep}{5pt}
\noindent\shadowbox{\begin{minipage}[t]{0.94\columnwidth} {\it ``Can we design an MPLiTS method that can 
			jointly achieve reliable liver tumor segmentation and uncertainty estimation among multi-phase CECT images?''} \end{minipage}}

In this paper, we propose a novel trustworthy multi-phase liver tumor segmentation (TMPLiTS) on CECT images,
jointly conducting the segmentation and uncertainty estimation in a unified framework, as shown in Fig.~\ref{fig:story}.
Specifically, evidence-based uncertainty~\cite{sensoy2018evidential} is first introduced to jointly cast segmentation and uncertainty as evidence via expert layers,
modeled based on Dempster-Shafer Evidence Theory (DST) and Dirichlet distribution parameterization.
The expert layers can explicitly describe the reliability of liver tumor segmentation results from all phases. 
Then, a multi-expert mixture scheme (MEMS) is proposed based on a pixel-wise DST-based combination rule to fuse the multi-phase evidences,
in which theoretical analysis of MEMS guarantees the effect of fusion procedure which is validated by empirical results sufficiently.
To summarize, the main contributions are as follows: 
\begin{enumerate}
	\item A novel trustworthy multi-phase liver tumor segmentation (TMPLiTS) is proposed based on evidence-based uncertainty,
	 describing the reliability of segmentation results explicitly.
	To the best of our knowledge, we are among the first to represent the trustworthiness for multi-phase liver tumor segmentation (MPLiTS) on contrast enhanced CT (CECT) images.
	\item Multi-expert mixture scheme (MEMS) is designed to fuse the complementary results from multi-phase CECT images, 
	whose theoretical analysis guarantees the availability of fusion procedure.
	\item Extensive experiments conducted on our clinical in-house and external validation dataset demonstrate that 
	our method outperforms the state-of-the-art methods.
	Meanwhile, the robustness of TMPLiTS is verified which effectively promotes the reliability of MPLiTS against the three perturbed scenarios.
\end{enumerate}

\vspace{-2mm}
\section{Related Works}

\subsection{Multi-Phase Liver Tumor Segmentation}

Recent efforts of MPLiTS can be concluded as input-level fusion~\cite{ouhmich2019liver}, decision-level fusion~\cite{sun2017automatic, raju2020co},  and feature-level fusion methods~\cite{vogt2020segmenting, zhang2021multi, zhang2021modality, zheng2022automatic}.
The input-level fusion casts the multi-phase CECT images as the single image with multiply channels, 
while the segmentation network is adopted to the corresponding channels to predict the liver tumor regions.
Ouhmich \textit{et al.}~\cite{ouhmich2019liver} propose a cascaded convolutional neural network based on the U-Net architecture, 
and an input-level fusion strategy is designed for the arterial and portal venous phases images.
The two phase images are concatenated as an input data with multi-dimension.
The insight of decision-level fusion is to generate the final result in the prediction stage based on the multi-phase results,
where some intuitive operations are introduced, such as average operation. 
Sun \textit{et al.}~\cite{sun2017automatic} propose a multi-channel fully convolutional network (MC-FCN) to segment the liver tumors,
where the arterial, portal venous, and delayed phases of CECT images are exploited to model the networks independently.
The results of each network are aggregated by a fusion layer, and then, the softmax classifier is conducted to predict the final liver tumor regions.
Raju \textit{et al.}~\cite{raju2020co} develop a multi-phase framework based on co-heterogeneous training. 
The 15 combinations of non-contrast, arterial, venous, and delayed phases are considered in the prediction stage, 
where the average operation is conducted to derive a consensus result.
The feature-level fusion is a dominant strategy to MPLiTS, 
exploiting the complementary information among the different phases in terms of channel-wise and phase-wise features.
Vogt \textit{et al.}~\cite{vogt2020segmenting} propose a two-encoder U-Net architecture to segment the liver lesions in the arterial and portal venous phases,
where the independent encoded features of two phases are aggregated by the shared decoder.
Zhang \textit{et al.}~\cite{zhang2021multi} propose a deep learning-based method to aggregate the multi-phase information among the hierarchical features,
while an inpainting module is designed to refine the uncertain results by the neighboring features.
Zhang \textit{et al.}~\cite{zhang2021modality} design a mutual learning and modality-aware module for the arterial and venous phases. 
The attention maps are regressed by the modality-aware module to guide the feature fusion of different phases. 
Meanwhile, the intra and joint losses are conducted as a mutual learning strategy to share the interactive knowledge.
Zhang \textit{et al.}~\cite{zheng2022automatic} cast the pre-contrast, arterial, portal venous, and delayed phases as a time sequence,
while a shallow U-Net and a convolutional long short-term memory (C-LSTM) are constructed for the liver tumor segmentation.

These MPLiTS methods achieve the acceptable performances, however, 
the complicated fusion modules are designed empirically to aggregate the multi-phase information.
The theoretical analysis is weak, resulting in redundancy and low efficiency of fusion structures.
Furthermore, the interpretation of the results might not be presented explicitly, 
which is a valuable insight for clinicians to know the reliability of the {black box} model against the uncertain scenarios.


%

\begin{figure*}[!t]
	\centering
	\includegraphics[width=0.9\textwidth]{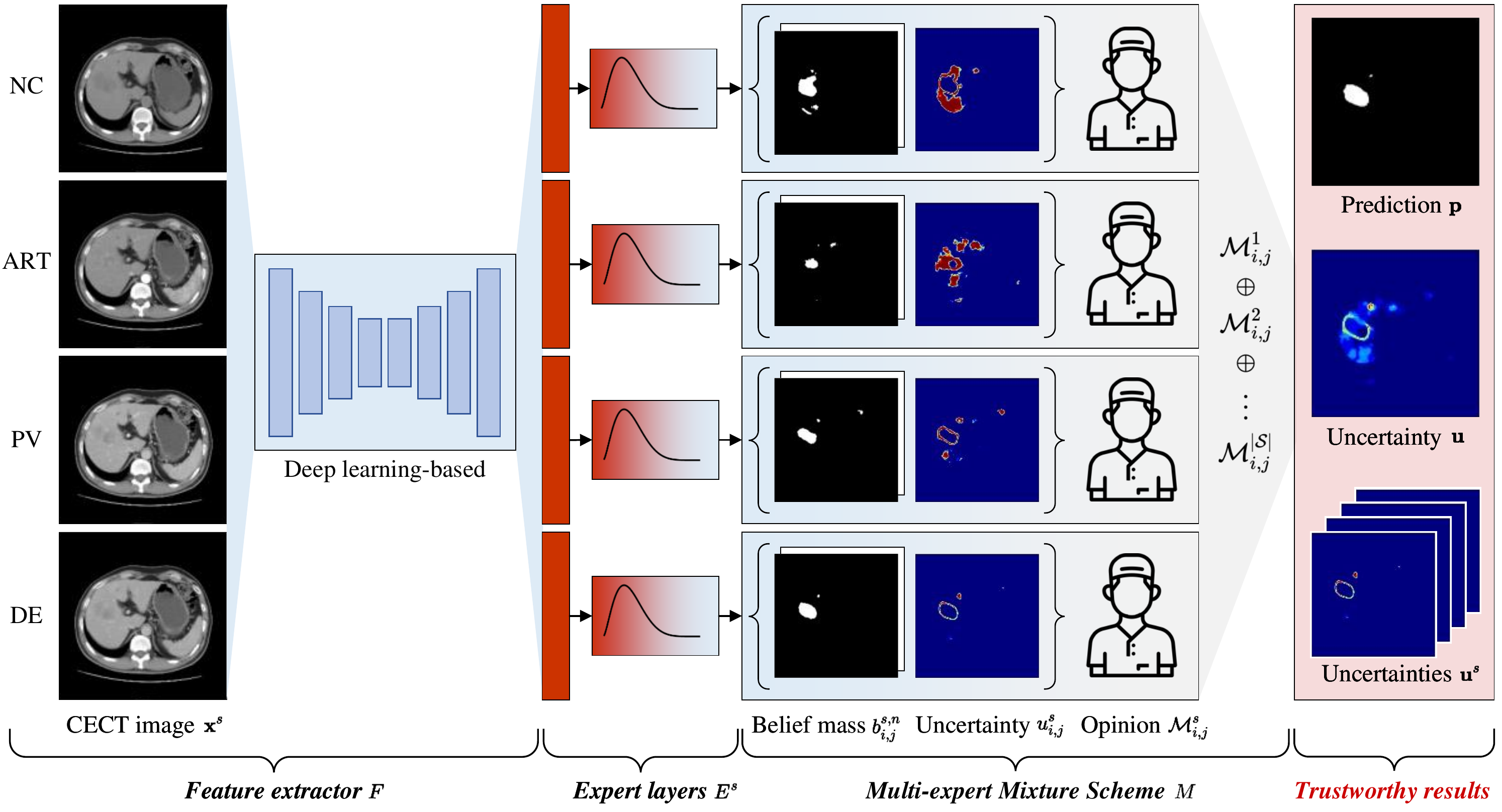}
	\caption{Overall framework of trustworthy multi-phase liver tumor segmentation, 
		jointly conducting the segmentation and uncertainty estimation in a unified framework. 
		The phases of CECT imaging involved in the framework are regarded as a set $\mathcal{S}$, 
		including non-contrast (NC), arterial (ART), portal venous (PV), and delayed (DE).  
		The feature map of $s$-th phase from CECT image is first represented via a deep learning-based feature extractor $F$.
		Then, the expert layers $E^{s}$ are modeled to introduce the evidence-based uncertainty based on DST and Dirichlet distribution parameterization, 
		which can explicitly cast the phase-wise reliability as the opinion $\mathcal{M}_{i,j}^{s}$, 
		composed of belief mass $b_{i, j}^{s,n}$ and uncertainty $u_{i,j}^{s}$ at the coordinate $(i, j)$.
		Meanwhile, MEMS is conducted to fuse the multi-phase opinions based on a pixel-wise DST-based combination rule.
		The trustworthy results can assist the clinicians to ``trust'' the model, conducting the reliable diagnosis among the multi-phase CECT images.
	} \label{fig:framework}
\vspace{-2mm}
\end{figure*}

\subsection{Uncertainty-Based Medical Image Segmentation}

The trustworthiness of {black box} model has attracted the considerable attention in the medical image analysis community~\cite{kundu2021ai, chen2022explainable, Gu2021CA, Liao2022Learning, Mao2022ImageGCN, Major2023On}, 
where uncertainty estimation is a popular insight to reveal the magnitude of trustworthiness~\cite{sensoy2018evidential, begoli2019need, Dong2021Evidential, Dong2022Multi}.
In medical image segmentation, the uncertainty-based methods are introduced to the paradigm of segmentation,
categorized as probabilistic-based~\cite{nair2020exploring, kohl2018probabilistic}, ensemble-based~\cite{lakshminarayanan2017simple}, and evidence-based~\cite{zou2022tbrats} methods.
Probabilistic-based methods construct a deep learning-based architecture to estimate the uncertainty based on a probability distribution, 
such as dropout, and conditional variational auto encoder (VAE).
The insight of ensemble-based methods is to train an ensemble of deep learning models to generate the uncertainty.
However, it might not be an available solution due to the high computation cost and low diversity.
Evidence-based methods conduct an evidential layer cascaded with the deep learning model, 
which can quantify the evidence-based uncertainty of segmentation results.
These uncertainty-based methods could be modeled for MPLiTS with an alternative strategy intuitively, 
as reported in~\cite{zou2022tbrats}.
The multi-phase CECT images are concatenated as a single image with multiple channels,
casting the multi-phase paradigm of MPLiTS as a single-phase liver tumor segmentation.

Compared with the intuitive strategy, there are two advantages of TMPLiTS over these uncertainty-based methods.
\textbf{\textit{First}}, the uncertainties of each phase are quantified independently, 
which can explicitly provide the multi-phase evidences to interpret the predictions,
assisting the reliable analysis of potential liver tumor on multi-phase images clinically.
There might be an intuitive solution that we could model an uncertainty-based method
for each phase image.
However, the complementary information among multi-phase results might not be considered adequately,
while the final uncertainty quantification would be obtained via intuitive operations, 
such as average operation, resulting in the unreliable uncertainty estimation.
Thus, the \textbf{\textit{second}} advantage of TMPLiTS is to extend the DST-based combination rule for the aggregation of the uncertainties among multi-phase CECT images, 
which can theoretically guarantee the reliability of fusion procedure.



\section{Methodology}

\subsection{Trustworthy Multi-Phase Liver Tumor Segmentation}
The overall framework of TMPLiTS is shown in Fig.~\ref{fig:framework},
which consists of feature extraction, uncertainty estimation, and uncertainty fusion.
Specifically, given a set of CECT images $\mathcal{X} = \{ \mathbf{x}^{s} | {s \in \mathcal{S} \}}$ 
with a set of multi-phase $\mathcal{S} = \{\text{non-contrast}, \text{arterial}, \text{portal venous}, \text{delayed} \}$,
a CECT image $\mathbf{x}^{s} \in \mathbb{R}^{H \times W}$ of the $s$-th phase 
whose feature map $\mathbf{f}^{s} \in \mathbb{R}^{C \times H \times W}$ is first obtained via a deep learning-based feature extractor as follows:
\begin{equation}
	\vspace{-1mm}
	\mathbf{f}^{s} = F(\mathbf{x}^{s}),
\end{equation}
where $C$ is the number of channels and $H \times W$ is the resolution of image.  
The parameters of $F(\cdot)$ are shared among all phases.
Then, the procedure of uncertainty estimation and fusion is conducted via expert layers $E^{s}(\cdot)$ and MEMS $M(\cdot)$, respectively.
Evidence-based uncertainty is introduced to model the deep evidence of liver tumor segmentation via expert layers,
in which the probabilities of segmentation are assumed to follow Dirichlet distribution.
Meanwhile, a pixel-wise DST-based combination rule is exploited to aggregate the uncertainties from all phase-wise experts.
Formally, the liver tumor prediction $\mathbf{p} \in \mathbb{R}^{N \times H \times W}$, uncertainty $\mathbf{u} \in \mathbb{R}^{H \times W}$, 
and uncertainties of all phases $\{ \mathbf{u}^{s} | {s \in \mathcal{S} \}}$ can be formulated as follows:
\vspace{-1mm}
\begin{equation}
	\vspace{-1mm}
	\left\{ \mathbf{p}, \mathbf{u}, \{ \mathbf{u}^{s} | {s \in \mathcal{S} \}} \right\} = M( \{ E^{s}(\mathbf{f}^{s}) | s \in \mathcal{S} \} ),
\end{equation}
where $N = | \mathcal{N} |$. 
$\mathcal{N}$ is a set of prediction categories $\mathcal{N} = \{  \text{background}, \text{hepatocellular carcinoma (HCC)} \}$.

\subsection{Evidence-based Uncertainty for MPLiTS}
\label{sec:eu}

Based on Dempster-Shafer Evidence Theory (DST), 
the evidence framework of Subjective Logic (SL) \cite{josang2016subjective} is formulated to explicitly associate 
belief and uncertainty with Dirichlet distribution parameterization.
SL is introduced as the basic of TMPLiTS to derive the belief and uncertainty of each phase, presenting the reliable and trustworthy results.

Specifically, we first denote a belief mass and uncertainty for the $s$-th phase prediction whose constraint is as follows:
\vspace{-1mm}
\begin{equation}
	\vspace{-1mm}
\sum_{n \in \mathcal{N}} b_{i, j}^{s, n}+u_{i, j}^{s}=1,
\end{equation}
where $b_{i, j}^{s, n} \in [0, \infty ) $ and $u_{i, j}^{s} \in [0, \infty )$ are the belief mass of the $n$-th category and the uncertainty, respectively.
$(i, j) \in (H, W)$ denotes the coordinate of $b_{i, j}^{s, n}$ and $u_{i, j}^{s}$ for the liver tumor prediction.
The belief mass is assigned via Dirichlet distribution with parameter $\boldsymbol{\alpha}_{i,j}^{s} = \{ \alpha^{s, n}_{i,j} | n \in \mathcal{N}\}$:
\vspace{-1mm}
\begin{equation}
	\vspace{-1mm}
	Dir(\mathbf{p}_{i,j}^s | \boldsymbol{\alpha}_{i,j}^{s})= \left\{
	\begin{aligned}
			&\frac{1}{B(\boldsymbol{\alpha}_{i,j}^{s})} \prod_{n \in \mathcal{N} } (p_{i,j}^{s,n})^{\alpha_{i,j}^{s, n}-1} & \!\!\!\!\!\! \text { for } \mathbf{p}_{i,j}^{s} \in \mathcal{P}^{s}_{i,j} \\ 
			&0  &\text { otherwise }
		\end{aligned}
	\right. ,
\end{equation}
where $B(\cdot)$ is the Beta function, and $\mathcal{P}^{s}_{i,j}  = \{ \mathbf{p}_{i,j}^{s} | \sum_{n \in \mathcal{N}} p_{i,j}^{s,n} = 1  \text{ and } p_{i,j}^{s,n} \in [0,1],  \forall n\}$ is the $N$-dimensional unit simplex. $p_{i,j}^{s,n}$ is a probability mass of $n$-th category prediction for the $s$-th phase.

Then, the evidence $e_{i,j}^{s,n}$ are linked with Dirichlet parameter $\boldsymbol{\alpha}_{i,j}^{s}$ by:
\begin{equation}
	\vspace{-1mm}
\alpha^{s, n}_{i,j} = e_{i,j}^{s,n} + 1, 
\end{equation}
where $e_{i,j}^{s,n} \in  [0, +\infty)$ is obtained directly from the expert layer with a non-negative activation function.
Thus, the belief mass and uncertainty can be formulated as:
\vspace{-1mm}
\begin{equation}
	\vspace{-1mm}
	b_{i, j}^{s, n} =  \frac{e_{i,j}^{s,n}}{ \sum_{n \in \mathcal{N}} \alpha_{i,j}^{s,n}}  \text{\quad and \quad} u_{i, j}^{s} = \frac{N}{\sum_{n \in \mathcal{N}} \alpha_{i,j}^{s,n}}.
\end{equation}
Following the Dirichlet assumption, the expectation of $p_{i,j}^{s,n}$ is given by:
\begin{equation}\label{equ:pro_dir}
	\mathbb{E}\left(p_{i,j}^{s,n}\right)=\frac{\alpha_{i,j}^{s,n}}{\sum_{n \in \mathcal{N}} \alpha_{i,j}^{s,n}}.
\end{equation}

\subsection{Expert Layers and Multi-expert Mixture Scheme}

\subsubsection{Uncertainty Estimation}
The expert layers are conducted to infer the evidences for each phase, 
which can be denoted as follows:
\vspace{-1mm}
\begin{equation}
\{ e_{i,j}^{s,n} | n \in \mathcal{N}\} = E^{s}(\mathbf{f}_{i,j}^{s}),
\vspace{-1mm}
\end{equation}
where $E^{s}(\cdot)$ denotes the $s$-th expert layer composed of two convolutional layers and Rectified Linear Units (ReLU) functions.
However, the convergence of model is not achieved in the preliminary experiments due to the exploding gradients.
We argue that the upper bound of evidences $e_{i,j}^{s,n}$ are not constrained, resulting in the large values.
Thus, the last ReLU function of $E^{s}(\cdot)$ is replaced by a composite function as follows:
\vspace{-1mm}
\begin{equation}
	f(x) = e^{\text{tanh}(x)},
\vspace{-1mm}
\end{equation}
where $\text{tanh}(\cdot)$ is hyperbolic tangent function.
Then, the $s$-th expert opinion $\mathcal{M}_{i,j}^{s} = \left\{\{ b_{i, j}^{s, n} | n \in \mathcal{N} \}, u_{i, j}^{s} \right\}$ can be gathered via Dirichlet distribution parameterization.
Meanwhile, we can obtain the uncertainty of $s$-th phase $\mathbf{u}^{s} = \{ u_{i, j}^{s} | (i, j) \in (H,W)\}$.

\subsubsection{Uncertainty Fusion} 
Since the complementarity among expert opinions is inherent clinically~\cite{chernyak2018liver},
we extend the reduced Dempster's combination rule~\cite{Han2023Trusted} to MPLiTS in terms of pixel-wise, 
designing a mixture scheme for multi-expert opinions.
\setcounter{definition}{0} 
\begin{definition}
	\textbf{Pixel-wise Reduced Dempster's Combination Rule	for $N$-Category Prediction at $(i, j)$.}
The combination of joint opinion $\mathcal{M}_{i,j} = \left\{\{ b_{i, j}^{n} | n \in \mathcal{N} \}, u_{i, j} \right\}$ is fused by two opinions 
$\mathcal{M}_{i,j}^{1} = \{\{ b_{i, j}^{1, n} | n \in \mathcal{N} \}, u_{i, j}^{1} \}$ and $\mathcal{M}_{i,j}^{2} = \{\{ b_{i, j}^{2, n} | n \in \mathcal{N} \}, u_{i, j}^{2} \}$
with the following rule:
\vspace{-2mm}
\begin{equation}
		\mathcal{M}_{i,j} = \mathcal{M}_{i,j}^{1} \oplus \mathcal{M}_{i,j}^{2}.
\vspace{-2mm}
\end{equation} 
Specifically, the combination rule of entities can be formulated as follows:
\vspace{-2mm}
\begin{equation}
		b_{i, j}^{n} = \frac{1}{1 - C}(b_{i, j}^{1, n}b_{i, j}^{2, n} + b_{i, j}^{1, n} u_{i, j}^{2} + b_{i, j}^{2, n} u_{i, j}^{1}),
\vspace{-2mm}
\end{equation}
\vspace{-2mm}
\begin{equation}
		u_{i, j} = \frac{1}{1 - C}{ u_{i, j}^{1} u_{i, j}^{2}},
\vspace{-2mm}
\end{equation}
where $C = \sum_{n_{1} \neq n_{2}} b_{i, j}^{1, n_{1}}b_{i, j}^{2, n_{2}}$ is a coefficient to measure the conflict between $\mathcal{M}_{i,j}^{1}$ and $\mathcal{M}_{i,j}^{2}$,  $\frac{1}{1 - C}$ is a normalization factor.

\end{definition}

According to the above-mentioned definition, the joint opinion $\mathcal{M}_{i,j}$ fused from multi-expert opinions can be formulated as follows:
\vspace{-3mm}
\begin{equation}
	\mathcal{M}_{i,j} = \mathcal{M}_{i,j}^{1} \oplus \mathcal{M}_{i,j}^{2} \oplus \cdots \mathcal{M}_{i,j}^{ \lvert \mathcal{S} \rvert}.
\vspace{-2mm}
\end{equation} 
Finally, the liver tumor prediction $\mathbf{p} = \{ {p}_{i,j}^{n} | (n,i,j) \in (N, H, W) \}$ 
and joint uncertainty $\mathbf{u}=\{u_{i, j} | (i,j) \in (H, W)\}$ can be obtained by $\mathcal{M}_{i,j}$ intuitively.

\subsubsection{Loss Function.} 
The loss function in the training procedure consists of phase-wise and mixture-wise losses,
which can be denoted as follows:
\begin{equation}
		\mathcal{L} = \overbrace{\frac{\lambda_{p}}{|\mathcal{S}|} \sum_{s \in \mathcal{S}}\mathcal{L}_{\gamma}(\mathbf{y}, \mathbf{p}^{s}, \boldsymbol{\alpha}^{s})}^{\text{phase-wise}} + \underbrace{\lambda_{m}\mathcal{L}_{\gamma}(\mathbf{y}, \mathbf{p}, \boldsymbol{\alpha})}_{\text{mixture-wise}}, 
\end{equation} 
where $\lambda_{p}$ and $\lambda_{m}$ are set to $0.5$ and $1$, respectively.
$\mathbf{p}^{s} = \{ {p}_{i,j}^{s, n} | (n,i,j) \in (N, H, W)\}$ is the liver tumor prediction of the $s$-th phase, $\mathbf{y} = \{ {y}_{i,j}^{n} | (n,i,j) \in (N, H, W)\}$ is the ground truth of liver tumor, 
and $\boldsymbol{\alpha} = \{ \alpha^{n}_{i,j} | (n,i,j) \in (N, H, W)\}$ is the Dirichlet distribution parameters. $\mathcal{L}_{\gamma}$ is composed of Dice loss $\mathcal{L}_{\text{D}}$ and Evidence loss $\mathcal{L}_{\text{E}}$, which can be denoted as follows:
\vspace{-1mm}
\begin{equation}
	\mathcal{L}_{\gamma}(\mathbf{y}, \mathbf{p}^{s}, \boldsymbol{\alpha}^{s}) = \mathcal{L}_{\text{D}}(\mathbf{y}, \mathbf{p}^{s}, \boldsymbol{\alpha}^{s}) + \mathcal{L}_{\text{E}}(\mathbf{y}, \mathbf{p}^{s}, \boldsymbol{\alpha}^{s}),
\end{equation}
where $\mathcal{L}_{\text{E}}$~\cite{sensoy2018evidential} is extended as follows: 
\vspace{-1mm}
\begin{align}
	& \mathcal{L}_{\text{E}}(\mathbf{y}, \mathbf{p}^{s}, \boldsymbol{\alpha}^{s}) \nonumber \\
	& =\sum_{i,j} \int\left[\sum_{n \in \mathcal{N}} -{y}_{i,j}^{n} \log \left({p}_{i,j}^{s, n}\right)\right] \frac{\prod_{n \in \mathcal{N}} (p_{i, j}^{s,n})^{\alpha_{i,j}^{s,n}-1} d \mathbf{p}_{i,j}^{s}}{B(\boldsymbol{\alpha}_{i,j}^{s})}  \nonumber \\
	& = \sum_{i,j} \sum_{n \in \mathcal{N}} {y}_{i,j}^{n} \left[\psi( \sum_{n \in \mathcal{N}} \alpha_{i,j}^{s,n} )-\psi(\alpha_{i,j}^{s,n}) \right],
\end{align}
where $\psi(\cdot)$ is the digamma function, and $\sum_{i,j}$ is a brief symbol of $\sum_{(i,j) \in (H, W)}$.

\subsection{Theoretical Analysis}
The advantage of MEMS is to conduct the mixture scheme for multi-phase opinions based on theoretical guarantees,
where four propositions can be induced in terms of \textit{prediction accuracy} and \textit{uncertainty estimation}. 

Specifically, we simplify the notation of original opinion at coordinate $(i,j)$ for the MPLiTS as:
\begin{equation}
	\mathcal{M}^{o} = \left\{\{ b^{o,n} | n \in \mathcal{N} \}, u^{o} \right\}. 
\end{equation}
The theoretical analysis is conducted to investigate whether the combination of another opinion
\begin{equation}
	\mathcal{M}^{a} = \left\{\{ b^{a,n} | n \in \mathcal{N} \}, u^{a} \right\}
\end{equation}
would deteriorate the performance of prediction.
The new opinion $\mathcal{M}$ is denoted as followed:
\begin{equation}
	\mathcal{M} = \mathcal{M}^{o} \oplus \mathcal{M}^{a} = \left\{\{ b^{n} | n \in \mathcal{N} \}, u \right\}.
\end{equation}
More specifically, $\mathcal{M}^{o}$ and $\mathcal{M}^{a}$ can be regarded as the expert opinions from multi-phase CECT images at coordinate $(i,j)$.
The four propositions~\cite{Han2023Trusted} are concluded as follows:
(1) The combination of two expert opinions can potentially improve the prediction accuracy of the model.
(2) The possible degradation of performance is limited under mild conditions, when the original opinion is fused with an additional opinion.
(3) The uncertainty of new opinion would be reduced by fusing the another opinion, 
and (4) also be large when the uncertainties of two opinions are both large.


\begin{proposition}
	Under the conditions $b^{a,t} \geq b^{o,m}$, 
	where $b^{a,t}$ is the belief mass of ground truth category $t$ and $b^{o,m}$ is the largest belief mass in $\{ b^{o,n} | n \in \mathcal{N} \}$,
	the new opinion satisfies $b^{t} \geq b^{o,t}$.
\end{proposition}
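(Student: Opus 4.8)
The plan is to reduce the claim to a polynomial inequality in the belief masses and then close it with the stated hypothesis. First I would record that the normalization factor is strictly positive: since $\alpha_{i,j}^{s,n}=e_{i,j}^{s,n}+1\ge 1$, every uncertainty obeys $u^{o},u^{a}>0$, hence $\sum_{n}b^{o,n}<1$ and $\sum_{n}b^{a,n}<1$, and therefore $C=\sum_{n_{1}\neq n_{2}}b^{o,n_{1}}b^{a,n_{2}}\le\big(\sum_{n}b^{o,n}\big)\big(\sum_{n}b^{a,n}\big)<1$. Because $1-C>0$, the inequality $b^{t}\ge b^{o,t}$ is equivalent to the cleared form $b^{o,t}b^{a,t}+b^{o,t}u^{a}+b^{a,t}u^{o}\ge b^{o,t}(1-C)$, which no longer involves division.

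Next I would eliminate the uncertainties through the simplex constraint $\sum_{n}b^{n}+u=1$, writing $u^{o}=1-\sum_{n}b^{o,n}$ and $u^{a}=1-\sum_{n}b^{a,n}$, and rewriting the conflict as $C=\big(\sum_{n}b^{o,n}\big)\big(\sum_{n}b^{a,n}\big)-\sum_{n}b^{o,n}b^{a,n}$. Substituting these into the cleared inequality and collecting terms, the differences telescope so that a factor $1-b^{o,t}$ can be pulled out. In the binary setting relevant here, $\mathcal{N}=\{t,\bar t\}$, this yields the compact identity
\[
b^{t}-b^{o,t}=\frac{1-b^{o,t}}{1-C}\left[\,b^{a,t}\big(1-b^{o,\bar t}\big)-b^{o,t}b^{a,\bar t}\,\right].
\]
Since $b^{o,t}\le 1$ and $1-C>0$, the sign of $b^{t}-b^{o,t}$ is governed entirely by the bracketed term, so it remains to prove the bracket is nonnegative.

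Finally I would close the bracket with the hypothesis $b^{a,t}\ge b^{o,m}$. As $b^{o,m}$ is the largest original belief, the hypothesis gives both $b^{a,t}\ge b^{o,t}$ and $b^{a,t}\ge b^{o,\bar t}$. Combining the latter with the simplex bound $b^{a,\bar t}\le 1-b^{a,t}$ yields $b^{a,\bar t}\le 1-b^{a,t}\le 1-b^{o,\bar t}$, hence $b^{a,t}\big(1-b^{o,\bar t}\big)\ge b^{a,t}b^{a,\bar t}$; subtracting $b^{o,t}b^{a,\bar t}$ then gives $b^{a,t}\big(1-b^{o,\bar t}\big)-b^{o,t}b^{a,\bar t}\ge b^{a,\bar t}\big(b^{a,t}-b^{o,t}\big)\ge 0$, where the last step uses $b^{a,t}\ge b^{o,t}$ and $b^{a,\bar t}\ge 0$. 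This establishes $b^{t}\ge b^{o,t}$.

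The main obstacle is the conflict coefficient $C$: it couples every pair of belief masses and sits in the denominator, so the reduction that isolates the single clean factor $1-b^{o,t}$ relies on the telescoping above rather than on any term-by-term bound. For a genuinely $N$-ary frame the same strategy applies, but the bookkeeping is heavier because $C=\sum_{n_{1}\neq n_{2}}b^{o,n_{1}}b^{a,n_{2}}$ contributes cross terms over all off-diagonal pairs; I would handle this by working with the unnormalized combined masses $\tilde b^{n}=b^{o,n}b^{a,n}+b^{o,n}u^{a}+b^{a,n}u^{o}$ together with the identity $1-C=\sum_{n}\tilde b^{n}+u^{o}u^{a}$, which recasts the target as $\tilde b^{t}(1-b^{o,t})\ge b^{o,t}\big(\sum_{n\neq t}\tilde b^{n}+u^{o}u^{a}\big)$ and again exposes $b^{a,t}\ge b^{o,m}$ as the quantity dominating the remaining off-diagonal contributions.
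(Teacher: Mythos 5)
Your argument is correct for the binary frame and reaches the conclusion by a genuinely different route from the paper. The paper's proof never clears the denominator: it bounds the numerator below by $b^{o,t}(b^{a,t}+u^{a}+u^{o})$ (using $b^{a,t}\ge b^{o,m}\ge b^{o,t}$ on the term $b^{a,t}u^{o}$), bounds the denominator above by $b^{o,m}(1-u^{a})+u^{a}+u^{o}-u^{o}u^{a}\le b^{o,m}+u^{a}+u^{o}$ (using $\sum_{n}b^{o,n}b^{a,n}\le b^{o,m}(1-u^{a})$), and then concludes from $b^{a,t}+u^{a}+u^{o}\ge b^{o,m}+u^{a}+u^{o}$ in one comparison; this chain works verbatim for any $N$. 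Your route instead eliminates $u^{o},u^{a}$ via the simplex constraint and produces the exact factorization
\begin{equation*}
b^{t}-b^{o,t}=\frac{1-b^{o,t}}{1-C}\bigl[\,b^{a,t}(1-b^{o,\bar t})-b^{o,t}b^{a,\bar t}\,\bigr],
\end{equation*}
which I have checked algebraically and which is a strictly stronger statement than the proposition for $N=2$: it isolates the precise sign-determining quantity and would also let you read off when the inequality is strict. Your closing of the bracket via $b^{a,\bar t}\le 1-b^{a,t}\le 1-b^{o,\bar t}$ and $b^{a,t}\ge b^{o,t}$ is sound, and your preliminary verification that $1-C>0$ is a point the paper silently assumes. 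The one caveat is scope: the proposition (and the paper's combination rule) is stated for a general $N$-category frame, and your general-$N$ treatment is only an outline --- the recast target $\tilde b^{t}(1-b^{o,t})\ge b^{o,t}\bigl(\sum_{n\neq t}\tilde b^{n}+u^{o}u^{a}\bigr)$ is correctly derived, but the claim that the hypothesis ``dominates the remaining off-diagonal contributions'' is not carried out. Since the paper's $\mathcal{N}=\{\text{background},\text{HCC}\}$ is binary this does not affect the application here, but if you want a proof of the proposition as stated you should either finish that bookkeeping or simply adopt the paper's termwise numerator/denominator bounds, which avoid it entirely.
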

\begin{proof}
	\begin{align*}
		& b^{t} = \frac{b^{o,t}b^{a,t}+b^{o,t}u^{a} + b^{a,t}u^{o}}{\sum_{n \in \mathcal{N}} b^{o,n}b^{a,n} + u^{a} + u^{o} - u^{o}u^{a} }  \\  
		& \geq \frac{b^{o,t}(b^{a,t} + u^{a} + u^{o}) }{ b^{o,m}(1-u^{a}) + u^{a} + u^{o} - u^{o}u^{a} } \\
		& \geq \frac{b^{o,t}(b^{a,t} + u^{a} + u^{o}) }{ b^{o,m} + u^{a} + u^{o} } \geq b^{o,t}.   
	\end{align*} 
\end{proof}
\begin{corollary}
	Since a positive correlation between belief mass $b^{n}$ and prediction $p^{n}$ of $\mathcal{M}$ mentioned in Section~\ref{sec:eu}, 
	the prediction accuracy can be improved potentially.
\end{corollary}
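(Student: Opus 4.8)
The plan is to obtain the corollary by pipelining Proposition~1 with the belief--probability identity implicit in Section~\ref{sec:eu}: Proposition~1 already guarantees that fusion does not decrease the belief mass of the true category, and I would transfer this gain to the predicted probability. First I would simply invoke Proposition~1, which under the hypothesis $b^{a,t}\geq b^{o,m}$ yields $b^{t}\geq b^{o,t}$ for the combined opinion $\mathcal{M}$; this is the only substantive input and may be taken as given.

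Next I would make the phrase ``positive correlation between belief mass and prediction'' precise. Writing $S=\sum_{n\in\mathcal{N}}\alpha^{n}$, the definitions of Section~\ref{sec:eu} give $b^{n}=e^{n}/S$ and $u=N/S$, and, together with $\alpha^{n}=e^{n}+1$ and the Dirichlet expectation of Section~\ref{sec:eu}, the prediction
\begin{equation*}
	p^{n}=\mathbb{E}(p^{n})=\frac{\alpha^{n}}{S}=\frac{e^{n}+1}{S}=b^{n}+\frac{u}{N}.
\end{equation*}
Hence, inside a single opinion, $p^{n}$ is an increasing affine function of $b^{n}$ whose additive offset $u/N$ is common to every category $n$. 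Consequently the ordering of the probabilities $\{p^{n}\}$ coincides with the ordering of the belief masses $\{b^{n}\}$, so the predicted category --- the one maximizing $p^{n}$ --- is determined entirely by the belief masses.

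I would then combine the two facts. Because the shared offset $u/N$ leaves the $\arg\max$ unchanged, raising the true-category belief relative to its competitors can only improve the rank of category $t$ among the predictions. Proposition~1 shows that fusion does not lower $b^{t}$, and under its hypothesis the additional opinion $\mathcal{M}^{a}$ puts more belief on $t$ than the original opinion puts on any single category, since $b^{a,t}\geq b^{o,m}$; thus the fused opinion makes $t$ at least as competitive as it was in $\mathcal{M}^{o}$. The predicted category is therefore at least as likely to equal the ground-truth label $t$, i.e.\ the prediction accuracy can potentially improve.

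The main obstacle I expect is the cross-opinion comparison: the offset $u/N$ differs between $\mathcal{M}^{o}$ and $\mathcal{M}$, so the scalar bound $b^{t}\geq b^{o,t}$ cannot be pushed directly into a probability bound $p^{t}\geq p^{o,t}$. The clean way around this is to reason at the level of the within-opinion ranking rather than absolute probabilities, since classification accuracy is decided by which category is predicted and that decision depends only on the belief-mass order, which Proposition~1 controls. This also explains the hedged wording ``potentially'': the hypothesis bounds $b^{t}$ only from below and does not guarantee $b^{t}\geq b^{n}$ for every competing $n$, so a strict per-pixel gain in accuracy need not hold.
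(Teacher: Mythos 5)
Your proposal is correct and follows essentially the same route as the paper, which states the corollary without a separate proof and justifies it exactly by combining Proposition~1 with the positive (here, affine) relation between belief mass and predicted probability from Section~\ref{sec:eu}. Your explicit identity $p^{n}=b^{n}+u/N$, the observation that the common offset preserves the within-opinion $\arg\max$, and your explanation of why the claim is only ``potential'' (Proposition~1 bounds $b^{t}$ only against $b^{o,t}$, not against every competitor, and the cross-opinion offsets differ) are faithful, more careful elaborations of what the paper leaves implicit.
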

\begin{proposition}
	$b^{o,t} - b^{t}$ has a negative correlation with $u^{a}$.
	When $u^{a}$ is large, the upper bound of $b^{o,t} - b^{t}$ is reduced, and the possible degradation would be limited.
\end{proposition}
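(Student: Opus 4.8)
The plan is to obtain a closed-form expression for the degradation $b^{o,t}-b^{t}$ and then exhibit an explicit upper bound that is manifestly decreasing in $u^{a}$. First I would write $b^{t}$ over the common denominator $1-C$, reusing the identity $1-C=\sum_{n\in\mathcal{N}}b^{o,n}b^{a,n}+u^{a}+u^{o}-u^{o}u^{a}$ that already appears in the proof of Proposition~1; it follows directly from the two simplex constraints $\sum_{n}b^{o,n}+u^{o}=1$ and $\sum_{n}b^{a,n}+u^{a}=1$. This gives $b^{o,t}-b^{t}=N/(1-C)$ with numerator $N=b^{o,t}(1-C)-(b^{o,t}b^{a,t}+b^{o,t}u^{a}+b^{a,t}u^{o})$.

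Next I would expand $N$ and observe that the two $b^{o,t}u^{a}$ contributions cancel, leaving $N=b^{o,t}\big(\sum_{n}b^{o,n}b^{a,n}-b^{a,t}\big)+u^{o}\big(b^{o,t}(1-u^{a})-b^{a,t}\big)$. The goal is then to show $N\le 1-u^{a}$. I would bound the two groups separately using only nonnegativity and the simplex constraints: from $b^{a,n}\le 1-u^{a}$ and $\sum_{n}b^{o,n}=1-u^{o}$ one gets $\sum_{n}b^{o,n}b^{a,n}\le(1-u^{a})(1-u^{o})$, and combined with $b^{a,t}\ge 0$ and $b^{o,t}\le 1$ the first group is at most $(1-u^{a})(1-u^{o})$; analogously the second group is at most $u^{o}(1-u^{a})$. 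Adding the two estimates collapses to $N\le(1-u^{a})(1-u^{o})+u^{o}(1-u^{a})=1-u^{a}$.

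Finally I would bound the denominator from below by discarding the nonnegative cross term, $1-C\ge u^{a}+u^{o}(1-u^{a})$, to conclude $b^{o,t}-b^{t}\le \frac{1-u^{a}}{u^{a}+u^{o}(1-u^{a})}$. Rewriting the denominator as $u^{a}(1-u^{o})+u^{o}$ shows that, for fixed $u^{o}$, the numerator decreases while the denominator increases as $u^{a}$ grows; hence the bound is monotonically decreasing in $u^{a}$ and tends to $0$ as $u^{a}\to 1$. This is exactly the asserted negative correlation, and it formalizes the claim that a large $u^{a}$ shrinks the possible degradation.

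The main obstacle I anticipate is the bookkeeping in the numerator: the quantity $\sum_{n}b^{o,n}b^{a,n}-b^{a,t}$ need not have a fixed sign, so each estimate must be justified in both cases, noting that when the bracket is negative the corresponding term is already $\le 0$ and is therefore dominated by the positive bound. Care is also needed to keep $u^{o}$ treated as fixed when invoking the monotonicity argument, since the phrase \emph{negative correlation with $u^{a}$} is a statement about the dependence on $u^{a}$ alone.
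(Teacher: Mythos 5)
Your argument is correct, but it takes a different algebraic route from the paper and lands on a different (looser) bound. The paper keeps $b^{o,t}-b^{t}$ as ``$b^{o,t}$ minus a fraction'': it discards the nonnegative terms $b^{o,t}b^{a,t}$ and $b^{a,t}u^{o}$ from the fraction's numerator, enlarges the denominator via $\sum_{n}b^{o,n}b^{a,n}\le b^{a,m}$ (the largest mass of $\mathcal{M}^{a}$), and then uses $b^{a,m}\le 1-u^{a}$ to arrive at $b^{o,t}-b^{t}\le b^{o,t}\,\frac{1+u^{o}}{\frac{1}{1-u^{a}}+u^{o}}$. You instead put everything over the common denominator $1-C$, exploit the cancellation of the $b^{o,t}u^{a}$ terms, bound the resulting numerator by $1-u^{a}$ and the denominator below by $u^{a}+u^{o}(1-u^{a})$, obtaining $\frac{1-u^{a}}{u^{a}+u^{o}(1-u^{a})}$. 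Both bounds are monotonically decreasing in $u^{a}$ for fixed $u^{o}$ and vanish as $u^{a}\to 1$, so both establish the stated qualitative claim. What the paper's route buys is tightness: its bound never exceeds $b^{o,t}$, whereas yours drops the $b^{o,t}$ prefactor and blows up when $u^{a}$ and $u^{o}$ are both small (e.g.\ $u^{a}=u^{o}=0.1$ gives a bound of about $4.7$), so it is vacuous outside the large-$u^{a}$ regime the proposition cares about. What your route buys is a cleaner justification: you handle the sign of $\sum_{n}b^{o,n}b^{a,n}-b^{a,t}$ explicitly, and your chain of inequalities is fully verifiable step by step, whereas the paper's displayed intermediate line ($\le\frac{b^{o,t}u^{a}}{1+u^{o}-u^{o}u^{a}}$) appears to contain a typo and only its final expression is consistent with the preceding steps.
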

\begin{proof}
	\begin{align*}
		& b^{o,t} - b^{t} = b^{o,t} - \frac{b^{o,t}b^{a,t}+b^{o,t}u^{a} + b^{a,t}u^{o}}{\sum_{n \in \mathcal{N}} b^{o,n}b^{a,n} + u^{a} + u^{o} - u^{o}u^{a} }  \\
		& \leq b^{o,t} - \frac{b^{o,t}u^{a}}{ b^{a,m} + u^{a} + u^{o} - u^{o}u^{a} } \\
		& \leq \frac{b^{o,t}u^{a}}{ 1+u^{o} - u^{o}u^{a}} = b^{o,t}\frac{1+u^{o}}{ \frac{1}{1-u^{a}} + u^{o}}.  
	\end{align*} 
\end{proof}
\begin{proposition}
	After the fusion of the another opinion $\mathcal{M}^{a}$, the uncertainty $u$ of new opinion $\mathcal{M}$ would be reduced.
\end{proposition}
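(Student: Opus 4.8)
The plan is to work directly from the combination rule of Definition~1, which gives the fused uncertainty as $u = \frac{1}{1-C} u^{o} u^{a}$ with $C = \sum_{n_1 \neq n_2} b^{o,n_1} b^{a,n_2}$, and to show the stronger statement $u \leq \min(u^{o}, u^{a})$. The first step is to rewrite the normalization factor $1-C$ in a transparent form. Using the two belief--uncertainty constraints $\sum_{n \in \mathcal{N}} b^{o,n} = 1 - u^{o}$ and $\sum_{n \in \mathcal{N}} b^{a,n} = 1 - u^{a}$, I would expand the product $\left(\sum_{n} b^{o,n}\right)\left(\sum_{n} b^{a,n}\right) = \sum_{n} b^{o,n} b^{a,n} + C$, which isolates $C = (1-u^{o})(1-u^{a}) - \sum_{n} b^{o,n} b^{a,n}$ and hence $1 - C = u^{o} + u^{a} - u^{o} u^{a} + \sum_{n} b^{o,n} b^{a,n}$. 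This is exactly the denominator already appearing in the proofs of Propositions~1 and~2, so the expression is consistent with the earlier analysis.

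Next, since every $b^{o,n}, b^{a,n} \geq 0$, the cross term $\sum_{n} b^{o,n} b^{a,n}$ is non-negative, which lets me bound $u$ from above by dropping it: $u = \frac{u^{o} u^{a}}{u^{o} + u^{a} - u^{o} u^{a} + \sum_{n} b^{o,n} b^{a,n}} \leq \frac{u^{o} u^{a}}{u^{o} + u^{a} - u^{o} u^{a}}$. It then remains to compare the right-hand side with $u^{o}$. Because the normalization constraint forces $u^{a} \in [0,1]$, the denominator $u^{o} + u^{a} - u^{o} u^{a} = u^{o} + u^{a}(1-u^{o})$ is non-negative, so clearing it reduces the inequality $\frac{u^{o} u^{a}}{u^{o} + u^{a} - u^{o} u^{a}} \leq u^{o}$ to $u^{a} \leq u^{o} + u^{a} - u^{o} u^{a}$, i.e.\ $u^{o}(1 - u^{a}) \geq 0$, which holds trivially. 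Hence $u \leq u^{o}$, and because the formula for $u$ is symmetric under swapping the superscripts $o$ and $a$, the identical argument yields $u \leq u^{a}$, so $u \leq \min(u^{o}, u^{a})$ and the fused uncertainty is indeed reduced.

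I do not expect a serious obstacle; the only step requiring care is the algebraic rewriting of $1-C$ through the two normalization constraints, since the entire reduction hinges on re-expressing $1-C$ in terms of $u^{o}$, $u^{a}$, and the non-negative cross term. A minor point worth noting is the degenerate case where the denominator vanishes (e.g.\ $u^{o} = u^{a} = 0$): there the numerator vanishes as well, so $u = 0$ and the claimed inequality holds vacuously, while $1 - C > 0$ whenever at least one of $u^{o}, u^{a}$ is positive, so the division is well defined in all non-trivial situations.
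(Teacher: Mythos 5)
Your proof is correct and follows essentially the same route as the paper's: rewrite $1-C$ as $u^{o}+u^{a}-u^{o}u^{a}+\sum_{n}b^{o,n}b^{a,n}$, drop the non-negative cross term, and check that the remaining fraction is at most $u^{o}$. You simply make explicit the steps the paper leaves implicit (the derivation of the $1-C$ identity, the final inequality reducing to $u^{o}(1-u^{a})\geq 0$, and the symmetric bound $u\leq\min(u^{o},u^{a})$), which is a welcome tightening rather than a different argument.
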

\begin{proof}
	\begin{align*}
		& u = \frac{u^{o}u^{a}}{\sum_{n \in \mathcal{N}} b^{o,n}b^{a,n} + u^{a} + u^{o} - u^{o}u^{a}} \\
		& \leq \frac{u^{o}u^{a}}{u^{a}+u^{o}-u^{o}u^{a}} \leq u^{o}. 
	\end{align*} 
\end{proof}
\begin{proposition}
	The uncertainty $u$ of new opinion $\mathcal{M}$ has a positive correlation with $u^{a}$ and $u^{o}$.
\end{proposition}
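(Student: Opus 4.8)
The plan is to read ``positive correlation'' as monotonicity, exactly as in Proposition~2, and to prove the claim by showing that the partial derivatives of $u$ with respect to $u^a$ and to $u^o$ are both non-negative. First I would rewrite the fused uncertainty compactly as $u = u^o u^a / D$, where $D = \sum_{n \in \mathcal{N}} b^{o,n} b^{a,n} + u^a + u^o - u^o u^a$, and observe that this $D$ is exactly the normalization factor $1 - C$ of Definition~1: using $\sum_n b^{o,n} = 1 - u^o$ and $\sum_n b^{a,n} = 1 - u^a$ one gets $1 - C = 1 - (1-u^o)(1-u^a) + \sum_n b^{o,n} b^{a,n} = D$, which is strictly positive away from total conflict. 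Consistently with the bound $\sum_n b^{o,n} b^{a,n} \geq 0$ already invoked in the proof of Proposition~3, I would treat $S := \sum_n b^{o,n} b^{a,n} \geq 0$ as a fixed non-negative quantity while differentiating.

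Next I would differentiate $u = u^o u^a / D$. Since $\partial D / \partial u^a = 1 - u^o$, the quotient rule produces the numerator $u^o D - u^o u^a (1 - u^o) = u^o\,[\,D - u^a(1-u^o)\,]$. The decisive algebraic step is the cancellation $D - u^a(1-u^o) = S + u^a + u^o - u^o u^a - u^a + u^a u^o = S + u^o$, which yields $\partial u / \partial u^a = u^o (S + u^o) / D^2 \geq 0$. Because the expressions for $u$ and $D$ are symmetric under exchanging the two opinions, the identical computation gives $\partial u / \partial u^o = u^a (S + u^a)/D^2 \geq 0$. Both partial derivatives being non-negative shows that $u$ is non-decreasing in each of $u^a$ and $u^o$, which is precisely the asserted positive correlation.

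The hard part will be the simplex constraint $\sum_n b^{o,n} + u^o = 1$ (and its analogue for phase $a$), which ties the belief masses to the uncertainties, so that treating $S$ as fixed is the same modeling convention implicit in the proofs of Propositions~1--3 rather than a literal total derivative. Under that convention the cancellation $D - u^a(1-u^o) = S + u^o$ settles the statement at once, since the numerator $u^o(S+u^o)$ is manifestly non-negative. A fully general argument that also differentiates $S$ would add a term $-\,u^o u^a\,\partial S/\partial u^a$ to the numerator, so the only extra work would be to control $\partial S/\partial u^a$ under the constraint; this is the single place I would expect to need care, and it is avoided entirely by adopting the paper's fixed-belief reading.
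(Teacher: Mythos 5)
Your argument is correct, and it lands on the same convention the paper uses (belief masses, hence $S=\sum_n b^{o,n}b^{a,n}$, held fixed while the uncertainties vary), but the route is different. The paper does not differentiate at all: it divides the numerator and denominator of $u = u^{o}u^{a}/D$ through by $u^{o}u^{a}$ to obtain
\begin{equation*}
u \;=\; \frac{1}{\sum_{n\in\mathcal{N}}\left(\frac{b^{a,n}b^{o,n}}{u^{o}u^{a}}+\frac{b^{a,n}}{u^{a}}+\frac{b^{o,n}}{u^{o}}\right)+1},
\end{equation*}
after which every term in the denominator is visibly non-increasing in $u^{a}$ and in $u^{o}$, so $u$ is non-decreasing in both by inspection. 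Your quotient-rule computation, with the clean cancellation $D-u^{a}(1-u^{o})=S+u^{o}$ giving $\partial u/\partial u^{a}=u^{o}(S+u^{o})/D^{2}\ge 0$, reaches the same conclusion and has the minor advantage of producing an explicit, manifestly non-negative expression for the rate of change (and of making the symmetry between the two opinions transparent); the paper's rewrite is shorter and avoids calculus entirely. Your identification of $D$ with the normalization factor $1-C$ is correct and is implicitly used by the paper as well. The caveat you raise about the simplex constraint $\sum_n b^{o,n}+u^{o}=1$ is legitimate, but it applies equally to the paper's own proof, which likewise treats the belief masses as fixed parameters; neither proof attempts the total derivative, so you have not introduced any gap the paper does not already accept.
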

\begin{proof}
	\begin{align*}
		& u = \frac{u^{o}u^{a}}{\sum_{n \in \mathcal{N}} (b^{a,n}b^{o,n}+b^{a,n}u^{o}+b^{o,n}u^{a})+u^{o}u^{a}} \\
		& = \frac{1}{\sum_{n \in \mathcal{N}}( \frac{b^{a,n}b^{o,n}}{u^{o}u^{a}} + \frac{b^{a,n}}{u^{a}} + \frac{b^{o,n}}{u^{o}} ) + 1}. 
	\end{align*}
\end{proof}

\begin{table*}[!t]
	\definecolor{whitesmoke}{RGB}{239,246,255}
	\centering
	\caption{Comparison of TMPLiTS with other state-of-the-art methods. 
		The average results and standard deviations are both reported for five-fold cross-validation.
		The best and the second best performances are marked with \textred{red} and \textblue{blue}, respectively.
		$\blacklozenge$ denotes a variant of TMPLiTS with independent feature extractors $F^{s}(\cdot)$. 
	}\label{tab:cmp_sota}
	\begin{tabular}{ccccccc}
		\Xhline{1pt} & & & & & &\\[-3mm] 
		\multicolumn{2}{c}{\multirow{2}{*}{Methods}} & \multicolumn{2}{c}{$\mathcal{T}_{\text{\textit{In}}, \text{\textit{Va}}}$} & \multicolumn{2}{c}{$\mathcal{T}_{\text{\textit{Ex}}, \text{\textit{Va}}}$} & \multirow{2}{*}{Memory footprint}\\ 
		\multicolumn{2}{c}{}                         & DGS        & DCS       & DGS        & DCS &\\
		\hline 
		\multirow{3}{*}{MPLiTS-based}            & \cellcolor{whitesmoke}SA-URI~\cite{zhang2021multi}             & \cellcolor{whitesmoke}82.21(0.95)          & \cellcolor{whitesmoke}80.91(0.96)         & \cellcolor{whitesmoke}{77.76}(0.76)          & \cellcolor{whitesmoke}70.95(1.15)       &\cellcolor{whitesmoke}956.9 MB\\
		& MAML~\cite{zhang2021modality}            & 80.91(0.82)          & 78.96(1.41)         & 76.96(1.45)          & 69.91(1.43)         & 113.2 MB\\
		& \cellcolor{whitesmoke}CLSTM~\cite{zheng2022automatic}            & \cellcolor{whitesmoke}79.92(1.23)          & \cellcolor{whitesmoke}78.54(1.56)         & \cellcolor{whitesmoke}77.49(1.24)          & \cellcolor{whitesmoke}{70.70}(1.35)      & \cellcolor{whitesmoke}302.5 MB\\
		\hline 
		\multirow{4}{*}{Uncertainty-based}  & UE~\cite{lakshminarayanan2017simple}              & \textblue{82.59}(0.43)          & \textred{81.71}(1.12)         & 78.18(1.78)          & 70.42(1.81)   & 255.9 MB\\
		& \cellcolor{whitesmoke}   PU~\cite{kohl2018probabilistic}        & \cellcolor{whitesmoke}78.87(1.15)          & \cellcolor{whitesmoke}77.02(1.21)         & \cellcolor{whitesmoke}76.91(1.44)          & \cellcolor{whitesmoke}67.75(2.04)      & \cellcolor{whitesmoke}104.4 MB\\
		&  DU~\cite{nair2020exploring}           &81.63(1.34)          &\textblue{81.64}(1.10)         & 76.68(1.47)          & 69.38(1.33)         & 29.7 MB\\
		& \cellcolor{whitesmoke}TBraTS~\cite{zou2022tbrats}            & \cellcolor{whitesmoke}81.09(0.97)          & \cellcolor{whitesmoke}80.25(1.49)         & \cellcolor{whitesmoke}\textblue{78.51}(1.12)          & \cellcolor{whitesmoke}\textred{72.00}(1.05)          & \cellcolor{whitesmoke}51.3 MB\\
		\hline 
		\multirow{2}{*}{Ours} & TMPLiTS             & 81.49(1.74)          & 79.74(1.24)         & 78.18(0.82)          & 71.72(1.38)   & 51.7 MB\\
		& \cellcolor{whitesmoke}TMPLiTS$\blacklozenge$            & \cellcolor{whitesmoke}\textred{82.60}(1.68)          & \cellcolor{whitesmoke}81.07(1.87)         & \cellcolor{whitesmoke}\textred{79.20}(0.86)          & \cellcolor{whitesmoke}\textblue{71.88}(1.28)   & \cellcolor{whitesmoke}205.3 MB\\
		\Xhline{1pt}
	\end{tabular}
\vspace{-3mm}
\end{table*}

\section{Experiments}

\subsection{Experimental Setup}

\subsubsection{Dataset} 
We evaluate TMPLiTS on both in-house dataset and external dataset. 
The in-house dataset collects 388 patients 
with 1552 multi-phase CECT volumes from The First Affiliated Hospital of Zhejiang University. 
All volumes are acquired by Philips iCT 256 scanners with non-contrast, arterial, portal venous, and delayed phases. 
The in-plane size of volumes is $512 \times 512$ with spacing ranges from 0.560 mm to 0.847 mm, 
and the number of slices ranges from 25 to 89 with spacing 3.0 mm.
The volumes of four phases are co-registered into venous phase by Elastix~\cite{Klein2010elastix} toolbox.
The external dataset consists of 82 patients with 328 multi-phase CECT volumes by Philips iCT 256 scanners from Zhongda Hospital Southeast University. 
The in-plane size of volumes is $512 \times 512$ with spacing ranges from 0.601 mm to 0.851 mm, 
and the number of slices ranges from 36 to 139 with spacing 2.5 mm.
The procedure of registration is the same as the in-house dataset. 
The ground truths of all volumes are annotated by two radiologists (with 10 years and 20 years of experiences in liver imaging, respectively),
where the HCC lesions are outlined in the delayed phase, with reference to the other phases.

\subsubsection{Evaluation Protocols} 
The evaluation tasks of TMPLiTS are conducted in terms of validity and reliability, 
which can be denoted as $\{ \mathcal{T}_{d,t} | d \in \{ \text{\textit{In}}, \text{\textit{Ex}} \}, t \in \{ \text{\textit{Va}}, \text{\textit{Re}} \}\}$.
\textit{In} and \textit{Ex} denote the dataset from the in-house and external, respectively.
\textit{Va} means the evaluation task of validity, and \textit{Re} denotes the evaluation task of reliability.
For instance, $\mathcal{T}_{\text{\textit{In}}, \text{\textit{Va}}}$ is the evaluation task of validity on the in-house dataset.
Specifically, the protocols of the evaluation tasks are as follows:
\begin{itemize}
	\item[$\bullet$] $\mathcal{T}_{\text{\textit{In}}, t}$, the multi-phase CECT volumes of in-house dataset are divided into five folds with the same ratio in terms of patient, 
	where five-fold cross-validation is employed to evaluate the proposed method. 
	\item[$\bullet$] $\mathcal{T}_{\text{\textit{Ex}}, t}$, the parameters of models trained on in-house dataset with five folds are frozen, 
	while the external dataset used as the validation dataset.
	\item[$\bullet$] $\mathcal{T}_{d, \text{\textit{Va}}}$, Dice global score (DGS) and Dice per case score (DCS) are utilized to evaluate the performance of liver tumor segmentation. 
	DCS is an average dice score in terms of patient case, while DGS is conducted across all dice scores of each slice.
	\item[$\bullet$] $\mathcal{T}_{d, \text{\textit{Re}}}$,  expected calibration error (ECE)~\cite{jungo2019assessing} and uncertainty-error overlap (UEO)~\cite{jungo2019assessing} are utilized to verify the reliability of trained model against the various perturbations. 
	Since the values of ECE might not be various obviously, $- \ln (x)$ is utilized to design a negative-logarithm ECE function.
\end{itemize}

\subsubsection{Implementation Details}
The experiments are conducted on a work station with NVIDIA Tesla A100 GPUs. 
The proposed method is implemented based on PyTorch deep learning framework.
The backbone of feature extractor $F(\cdot)$ is U-Net~\cite{ronneberger2015u},
where the last feature with 64 channels. 
In the training stage, the learnable parameters of TMPLiTS, including feature extractor and expert layers, 
are trained from scratch via Adam~\cite{kingma2015adam} with a weight decay of 1e-5. 
The total training epochs are 400, and the initial learning rate is 5e-4, 
while the learning rate is adjusted by cosine annealing cycles with the periodic iterations of 60 epochs.
The batch size of training is 4. 
The training samples are scaled as $224 \times 224$, while random rotation of -5 to 5 degrees is used as data argumentation.
The window width and window level are set to 140 and 40, respectively.
Such training strategy is utilized both for $\mathcal{T}_{\text{\textit{Ex}}, t}$ and $\mathcal{T}_{\text{\textit{In}}, t}$.

\subsection{Comparisons with Other Methods}

We compare TMPLiTS and its variant (TMPLiTS$\blacklozenge$) with 7 state-of-the-art methods in terms of validity and reliability, 
which can be categorized as MPLiTS-based~\cite{zhang2021multi, zhang2021modality, zheng2022automatic} 
and uncertainty-based~\cite{nair2020exploring, lakshminarayanan2017simple, kohl2018probabilistic, zou2022tbrats} methods.
TMPLiTS$\blacklozenge$ is an alternative version of TMPLiTS, 
whose feature extractor $F(\cdot)$ is adapted as the independent feature extractors $F^{s}(\cdot)$ for each phase.

\subsubsection{Evaluation of Validity on $\mathcal{T}_{\text{\textit{In}}, \text{\textit{Va}}}$ and $\mathcal{T}_{\text{\textit{Ex}}, \text{\textit{Va}}}$} 
As reported in Table~\ref{tab:cmp_sota}, 
we observe that TMPLiTS achieves the competitive performances of MPLiTS in terms of DCS and DGS on both $\mathcal{T}_{\text{\textit{In}}, \text{\textit{Va}}}$ and $\mathcal{T}_{\text{\textit{Ex}}, \text{\textit{Va}}}$,
while the memory footprint of our method is tolerant compared with SA-URI, CLSTM, and UE.
Furthermore, the performances of uncertainty-based methods are superior to MPLiTS-based methods on $\mathcal{T}_{\text{\textit{Ex}}, \text{\textit{Va}}}$.
It means that there might be a slight advantage of uncertainty-based methods to adapt ``\textit{unseen}'' instances in real-world clinical applications.

\begin{figure*}[!t]
	\centering
	\subfigure[Gaussian noise $\mathcal{T}_{\text{\textit{In}}, \text{\textit{Re}}_\text{noise}}$ with $\sigma^{2}_{n} = \{ 0.03, 0.05, 0.1, 0.2 \}$. ]{
		\begin{minipage}[t]{0.48\columnwidth}
			\centering
			\includegraphics[width=1\linewidth]{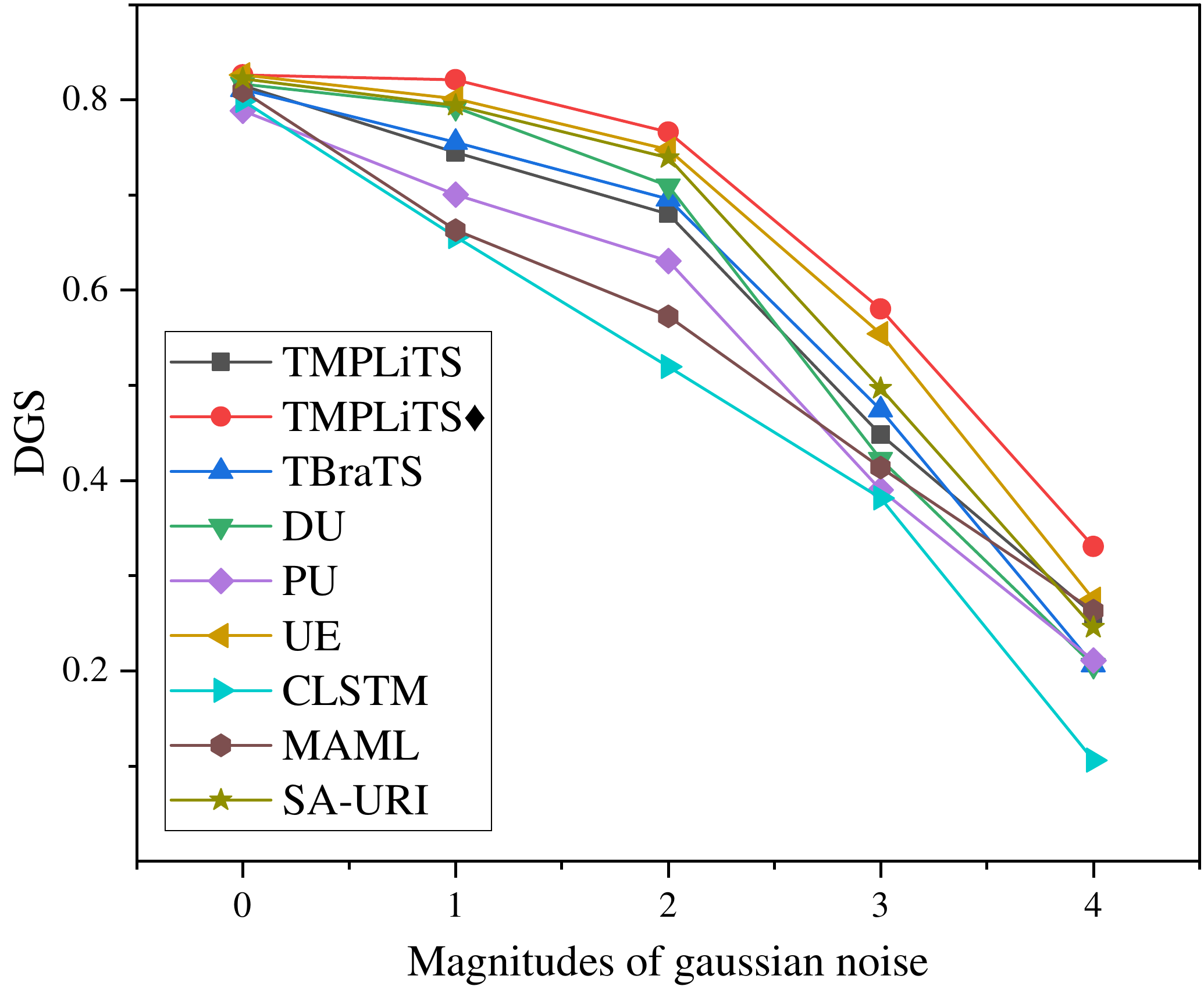}
		\end{minipage}%
		\begin{minipage}[t]{0.48\columnwidth}
			\centering
			\includegraphics[width=1\linewidth]{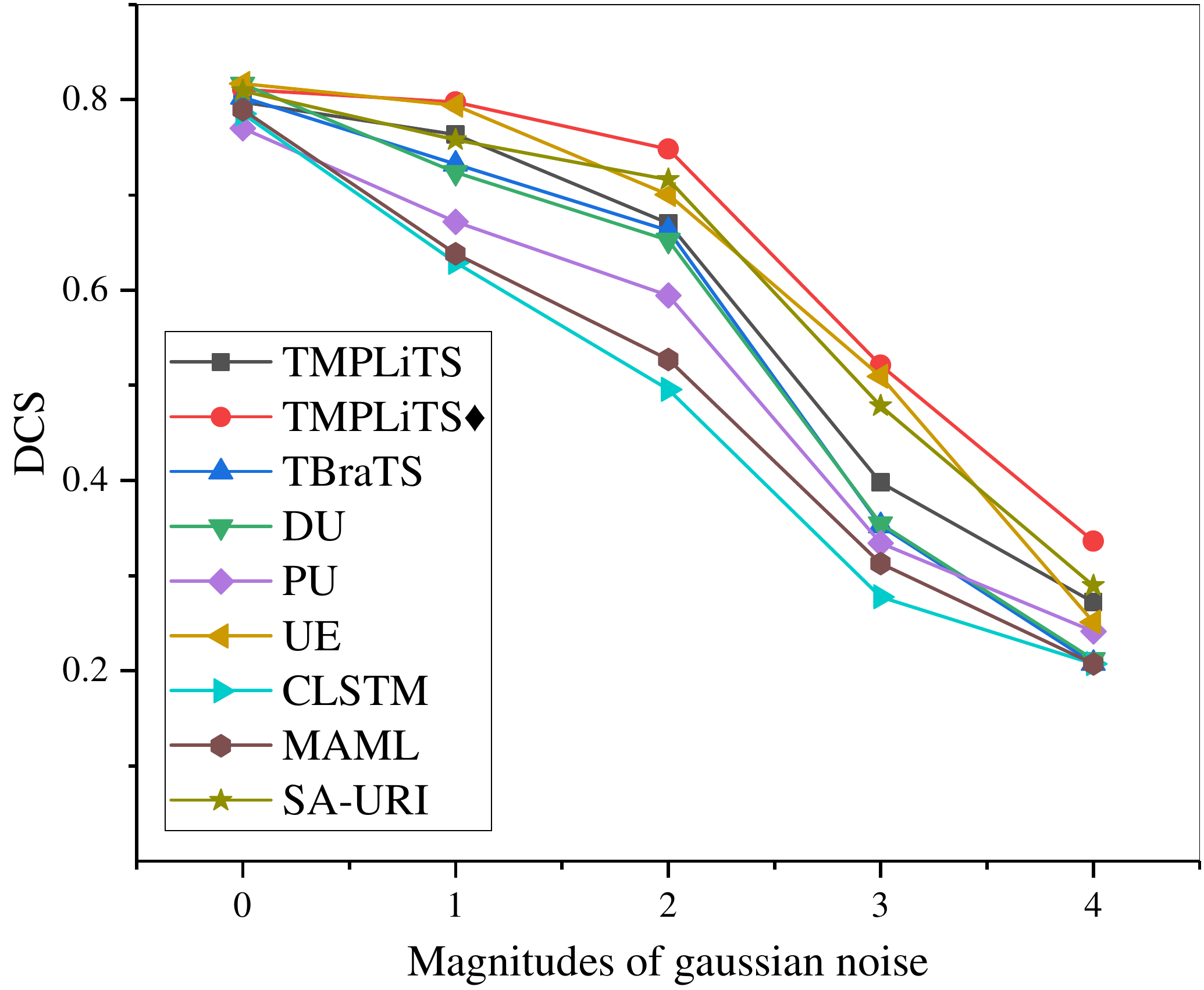}
		\end{minipage}%
		\begin{minipage}[t]{0.48\columnwidth}
			\centering
			\includegraphics[width=1\linewidth]{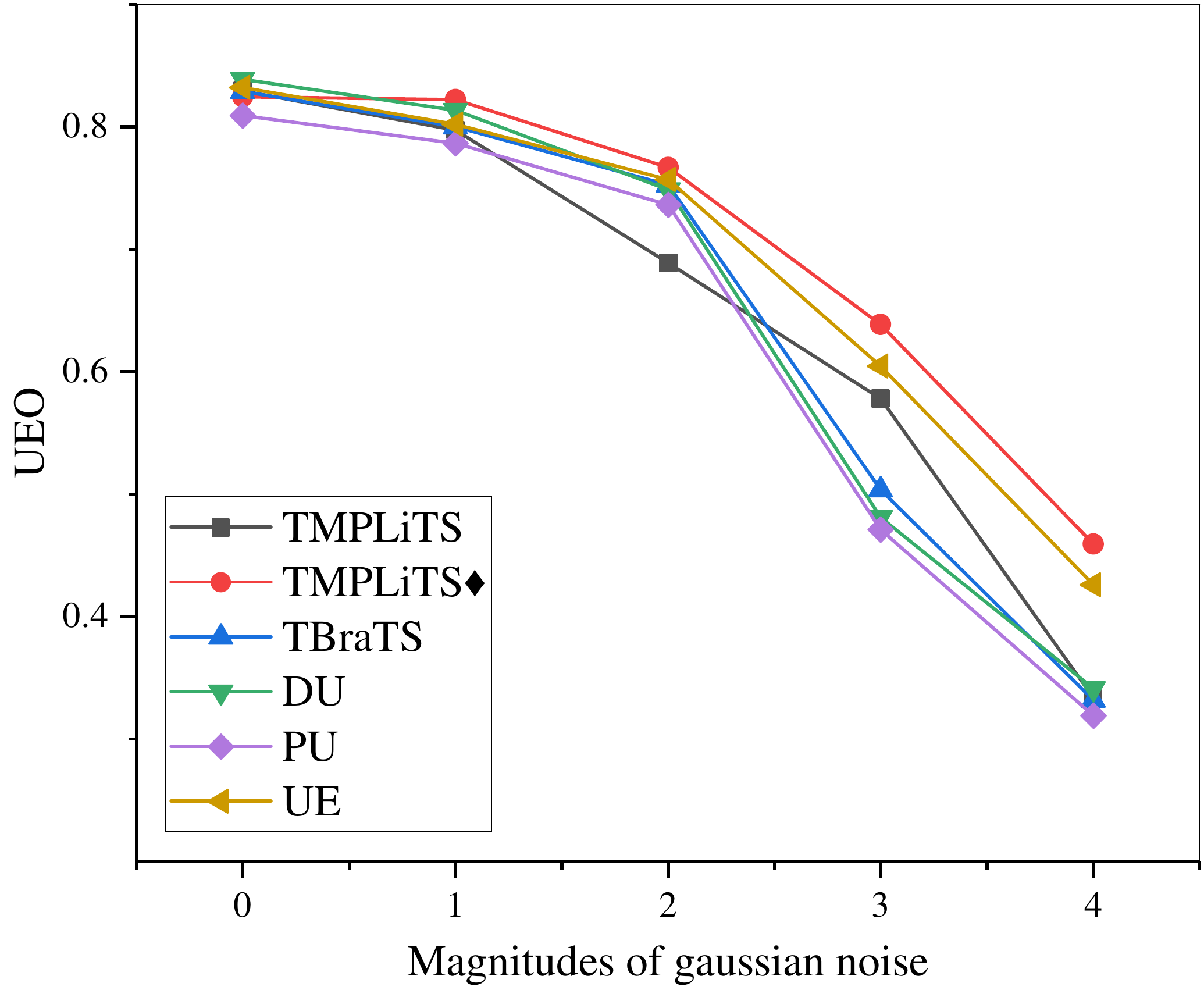}
		\end{minipage}%
		\begin{minipage}[t]{0.48\columnwidth}
			\centering
			\includegraphics[width=0.98\linewidth]{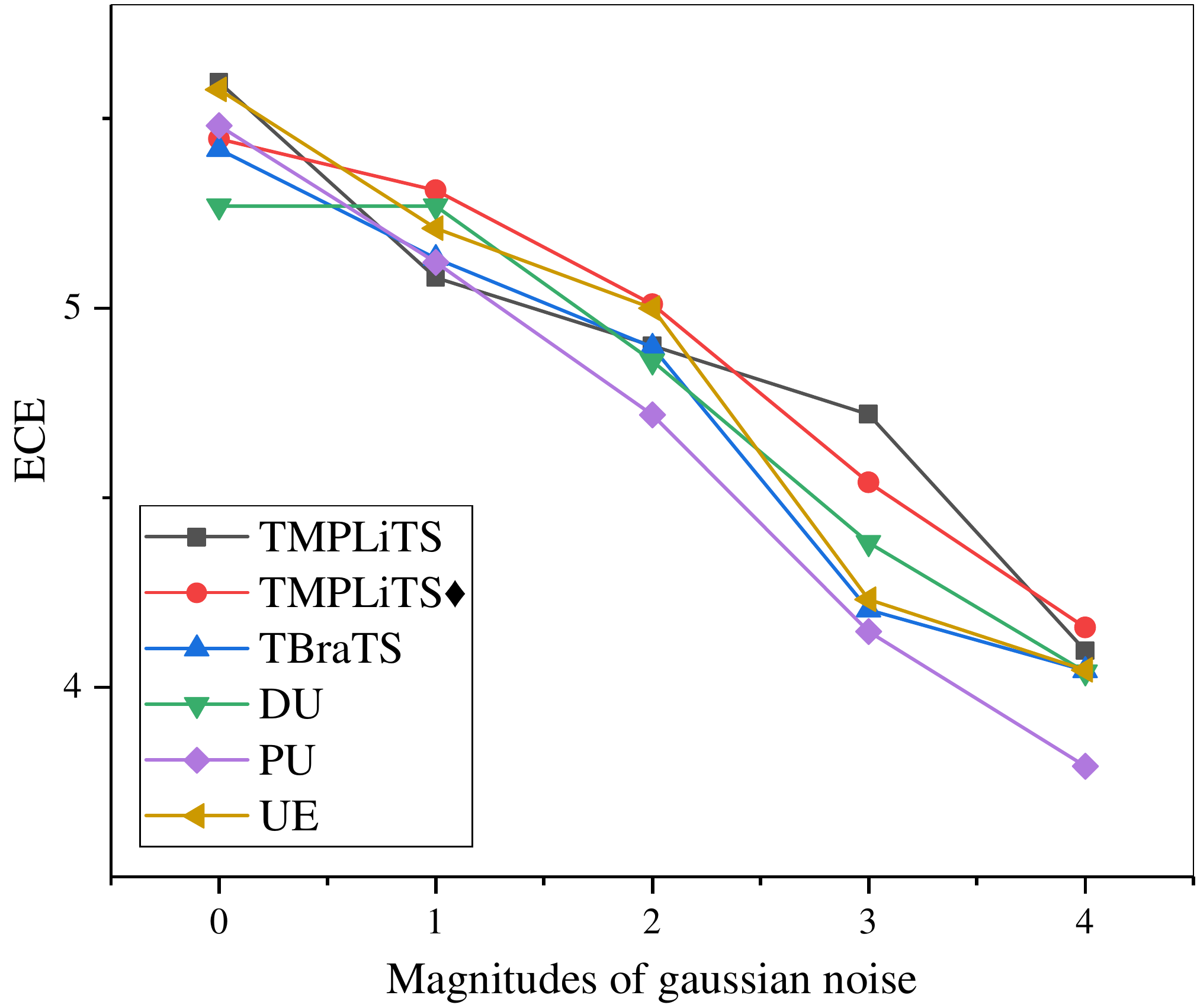}
		\end{minipage}%
	}
	
	\subfigure[Gaussian blur $\mathcal{T}_{\text{\textit{In}}, \text{\textit{Re}}_\text{blur}}$ with $(\sigma^{2}_{b}, k) = \{ (10, 9), (20, 9), (10, 13), (20, 23)\}$. ]{
		\begin{minipage}[t]{0.48\columnwidth}
			\centering
			\includegraphics[width=1\linewidth]{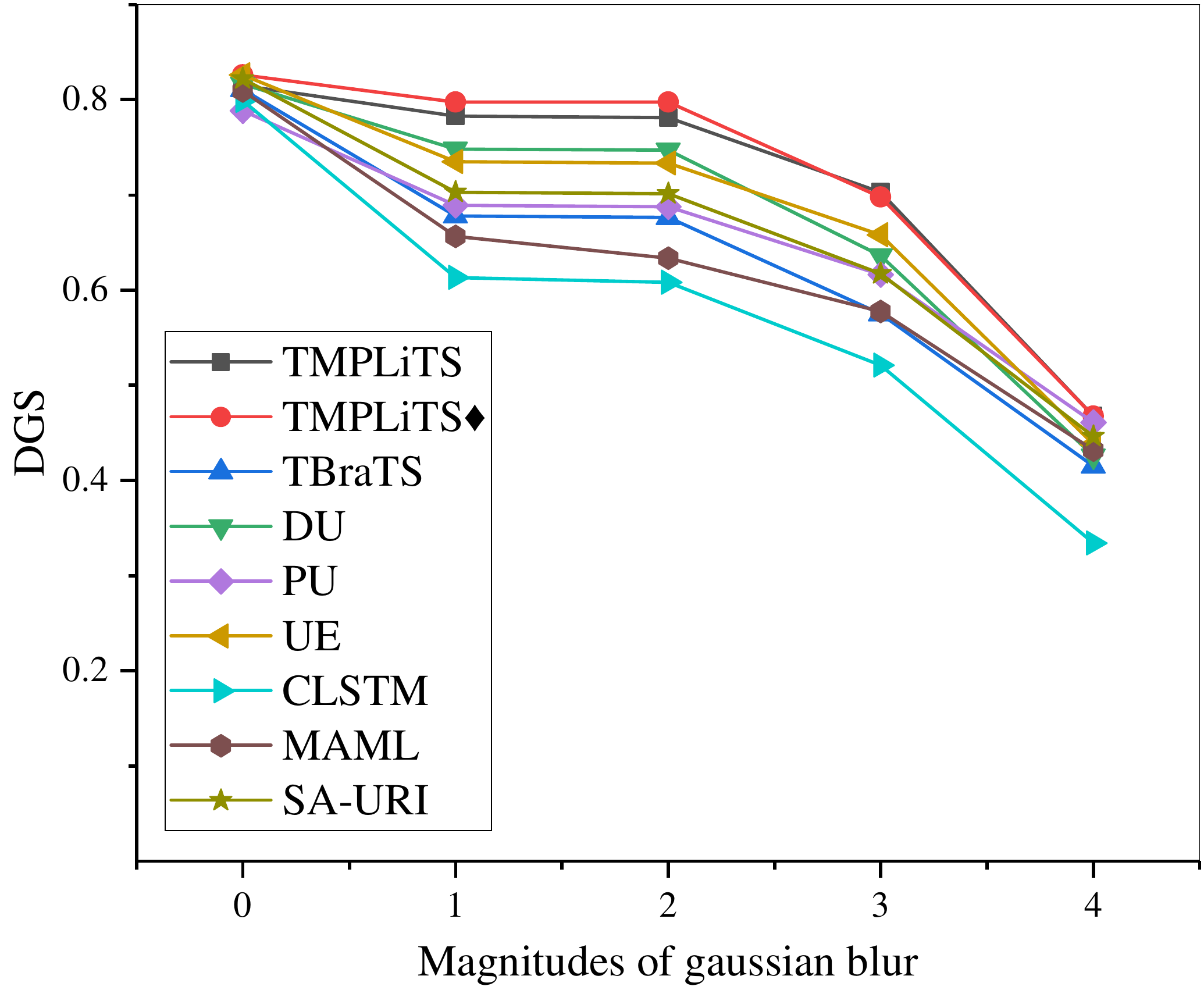}
		\end{minipage}%
		\begin{minipage}[t]{0.48\columnwidth}
			\centering
			\includegraphics[width=1\linewidth]{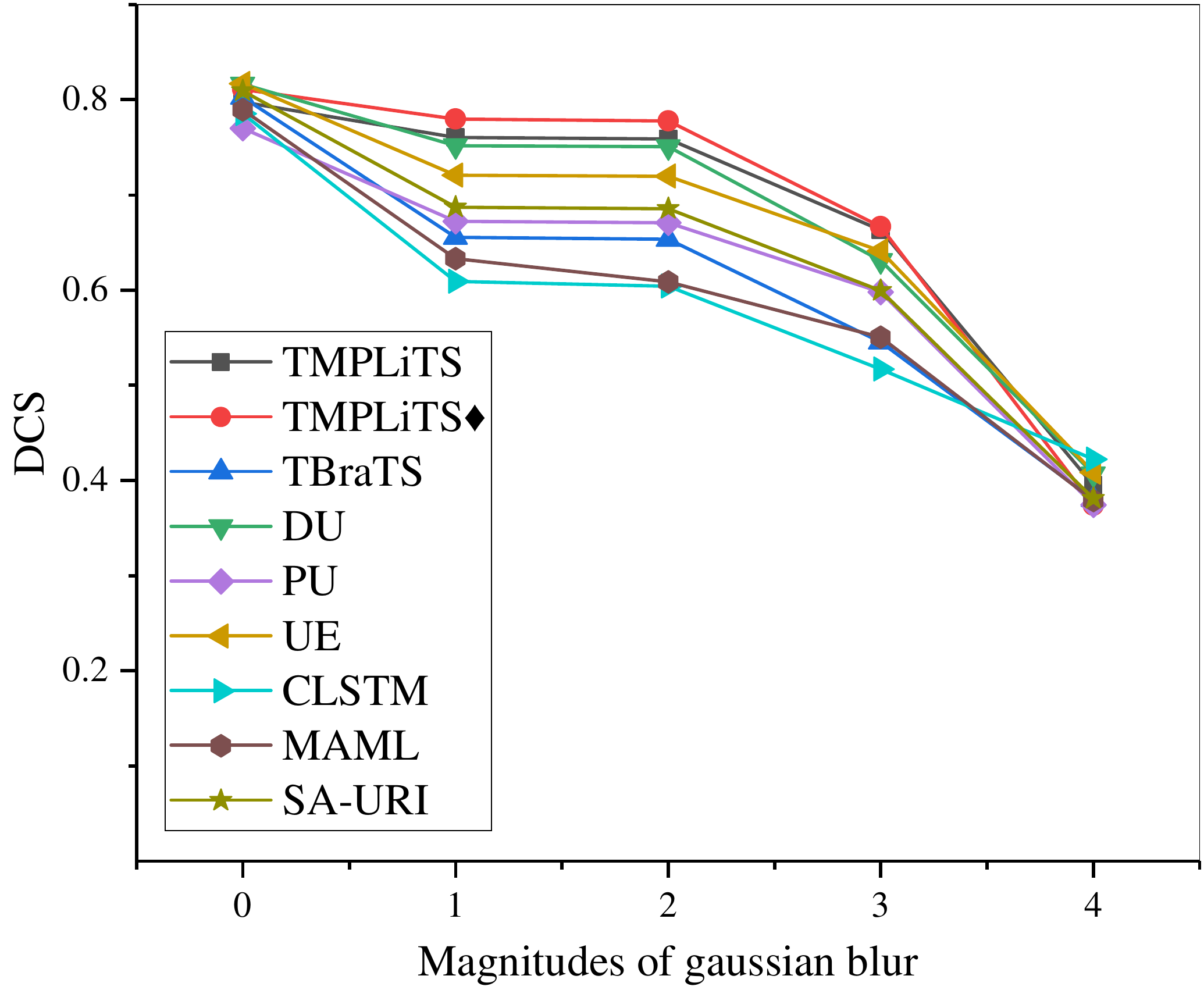}
		\end{minipage}%
		\begin{minipage}[t]{0.48\columnwidth}
			\centering
			\includegraphics[width=1\linewidth]{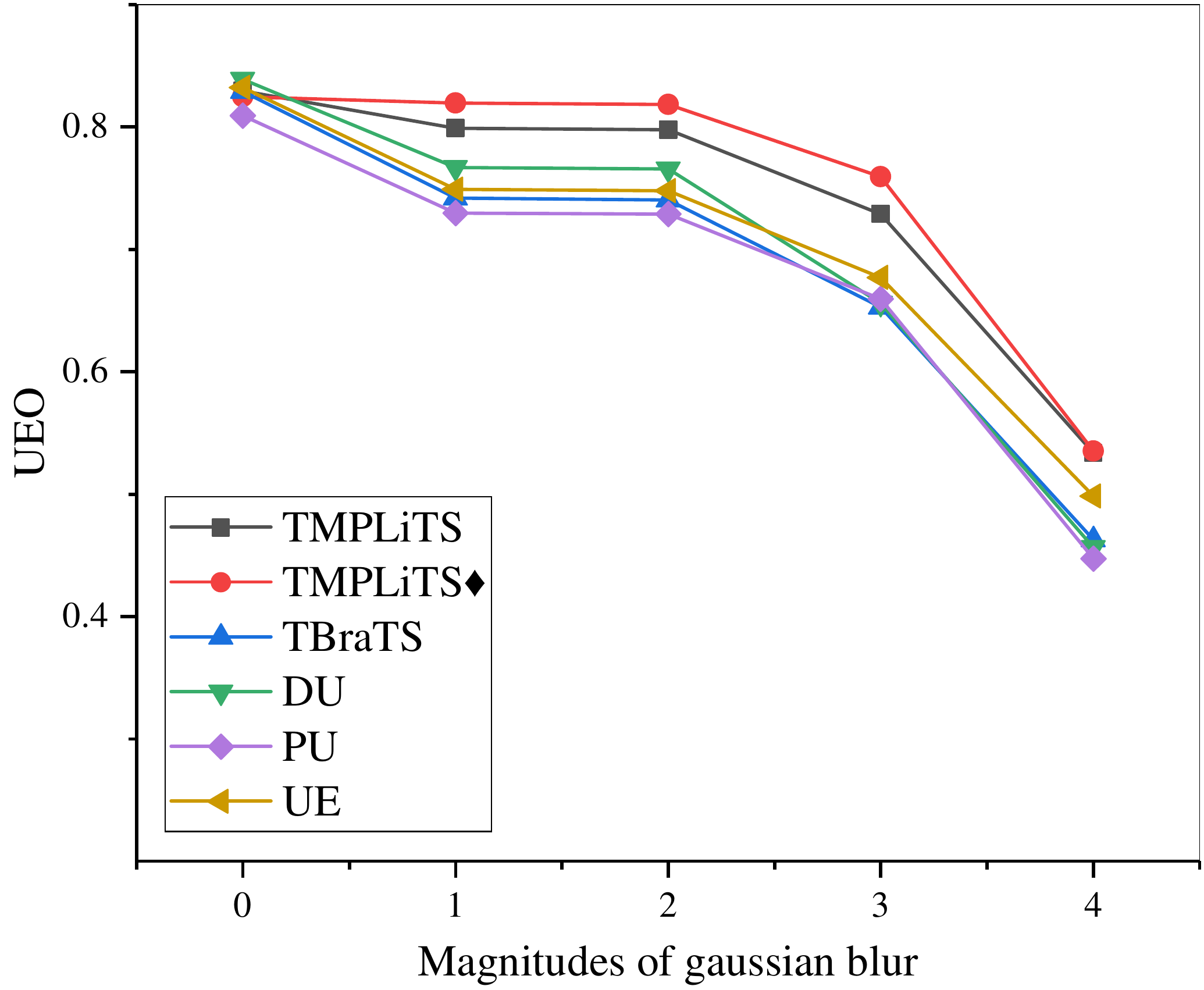}
		\end{minipage}%
		\begin{minipage}[t]{0.48\columnwidth}
			\centering
			\includegraphics[width=0.98\linewidth]{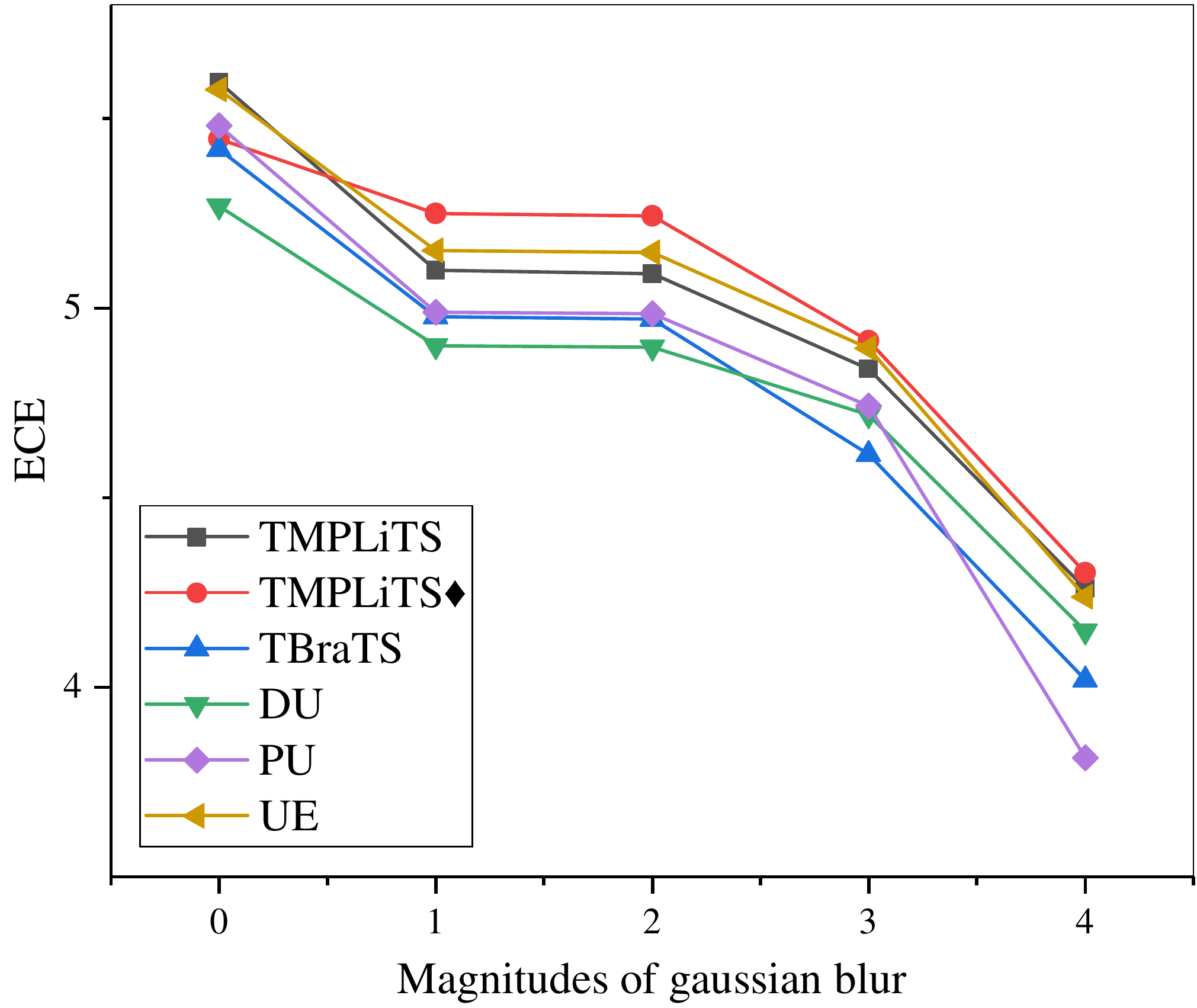}
		\end{minipage}%
	}
	
	\subfigure[Incomplete multi-phase $\mathcal{T}_{\text{\textit{In}}, \text{\textit{Re}}_\text{miss}}$ with $\Omega = \{ 1, 2 \}$. ]{
		\begin{minipage}[t]{0.48\columnwidth}
			\centering
			\includegraphics[width=1\linewidth]{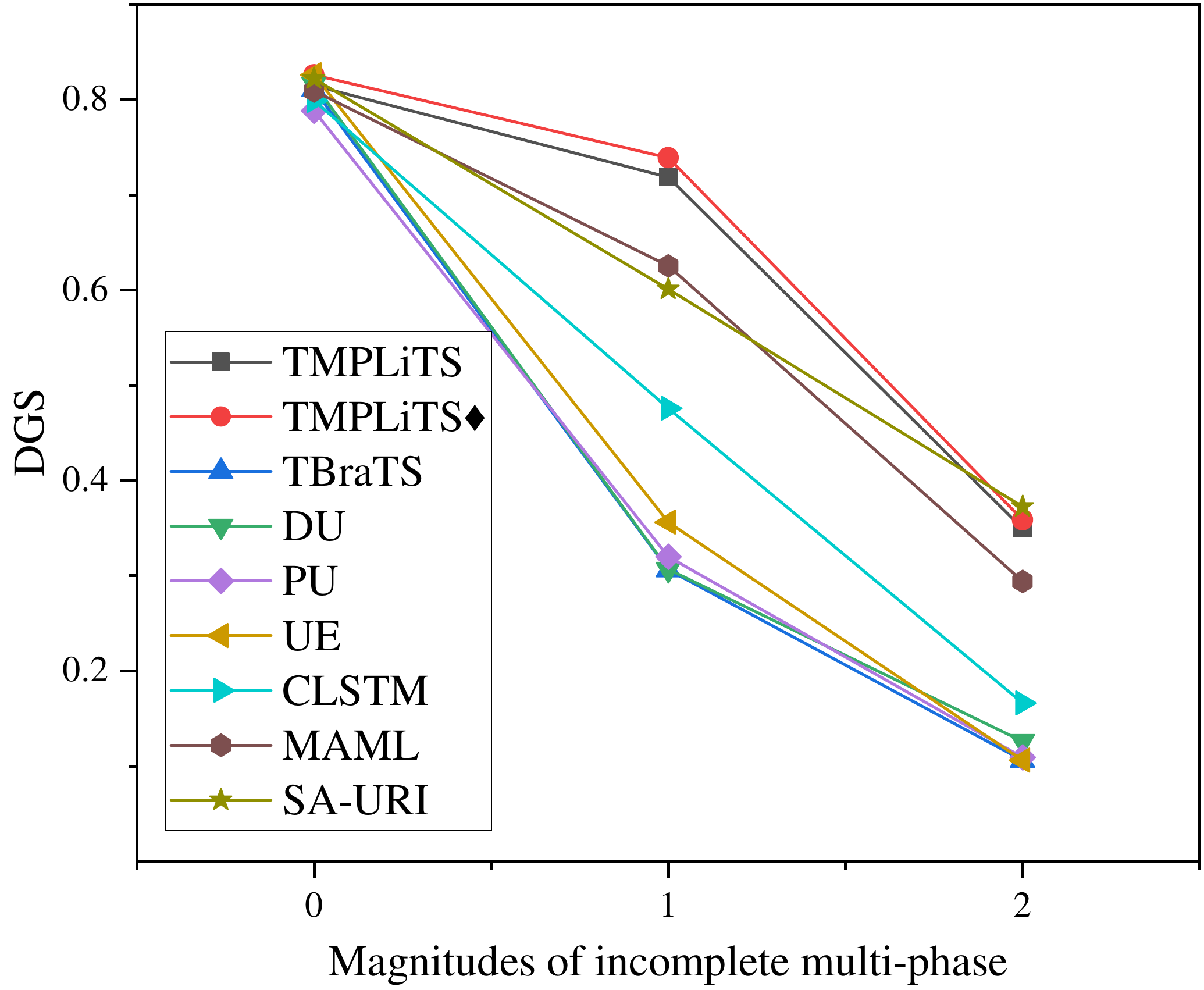}
		\end{minipage}%
		\begin{minipage}[t]{0.48\columnwidth}
			\centering
			\includegraphics[width=1\linewidth]{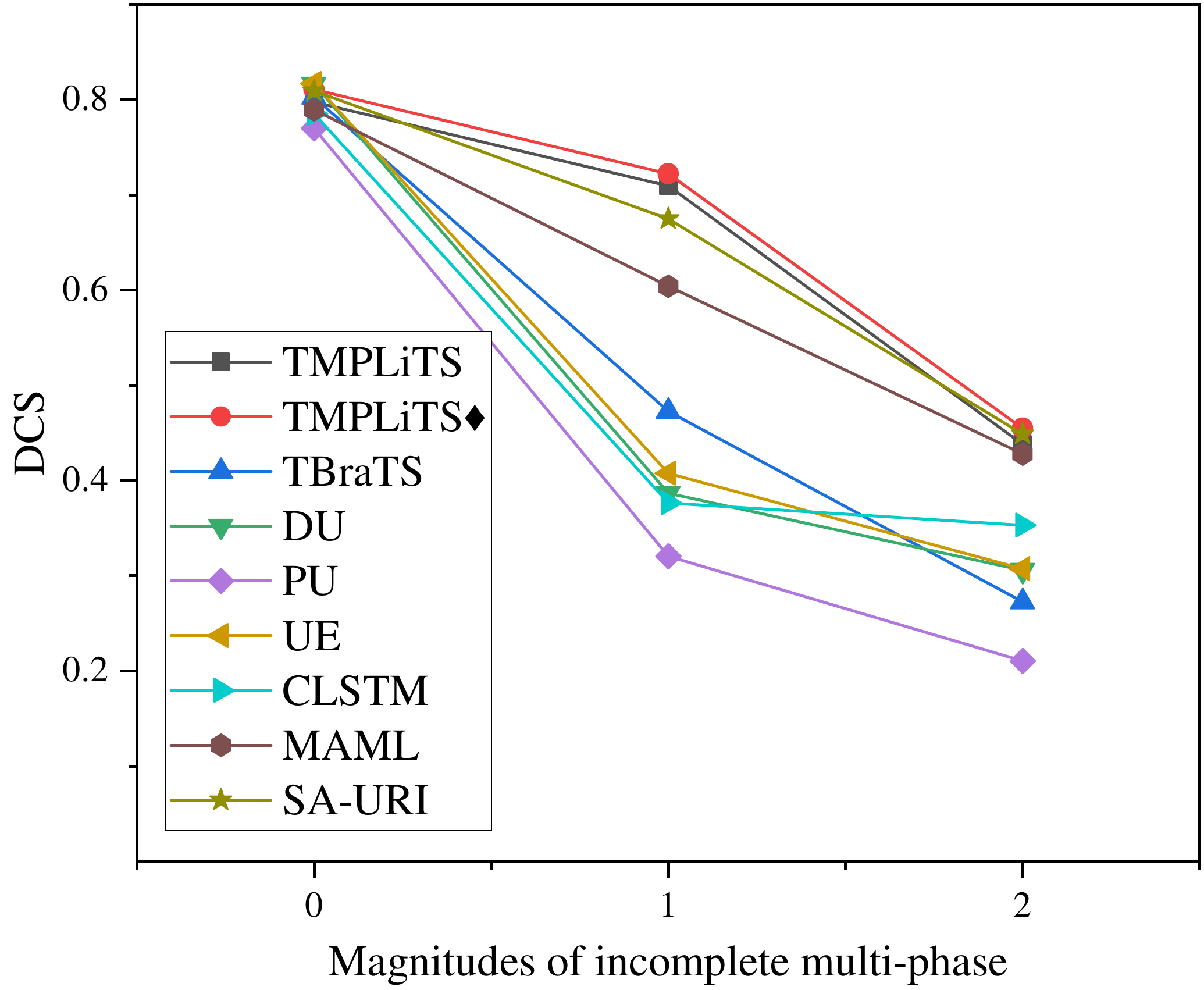}
		\end{minipage}%
		\begin{minipage}[t]{0.48\columnwidth}
			\centering
			\includegraphics[width=1\linewidth]{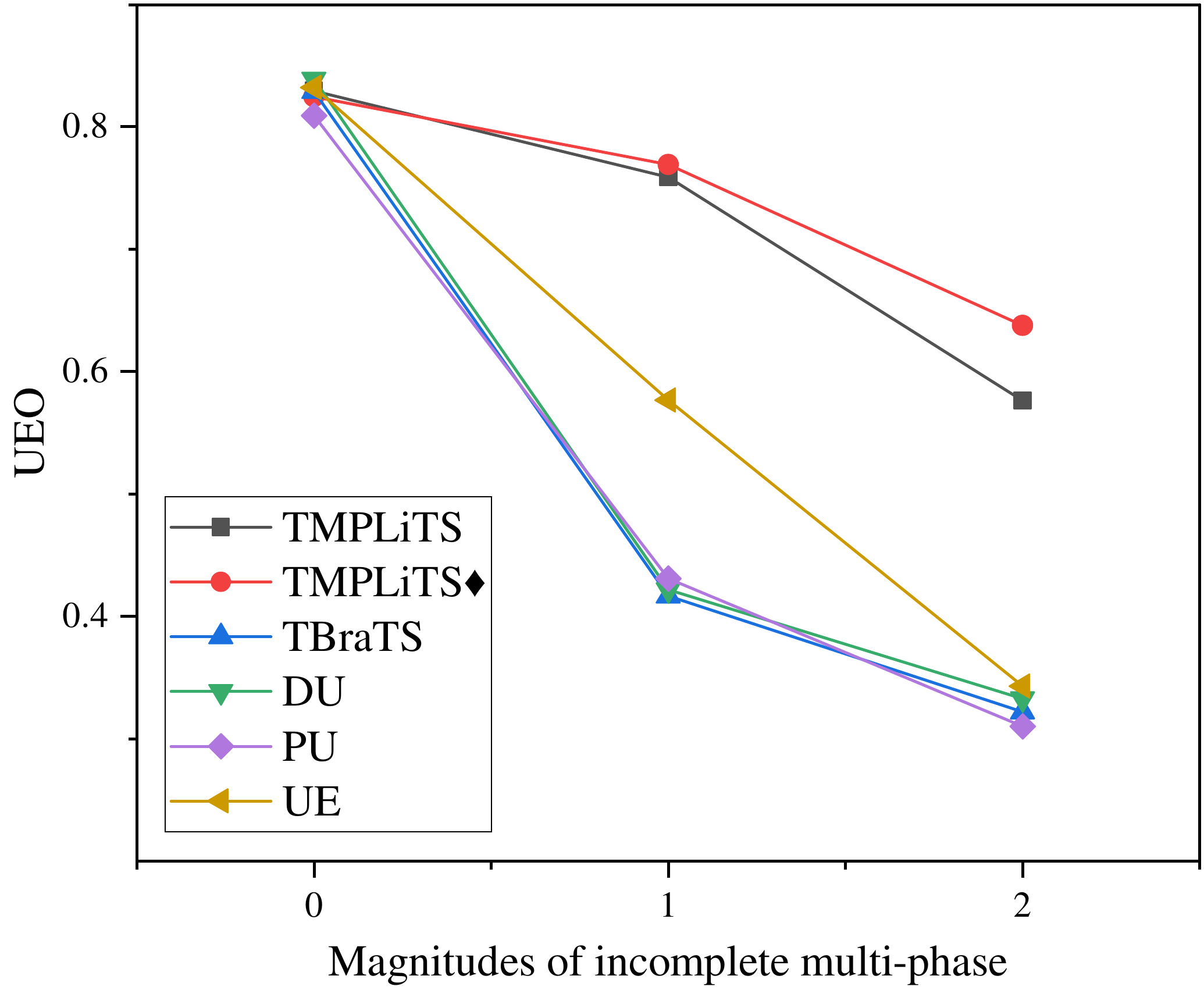}
		\end{minipage}%
		\begin{minipage}[t]{0.48\columnwidth}
			\centering
			\includegraphics[width=0.98\linewidth]{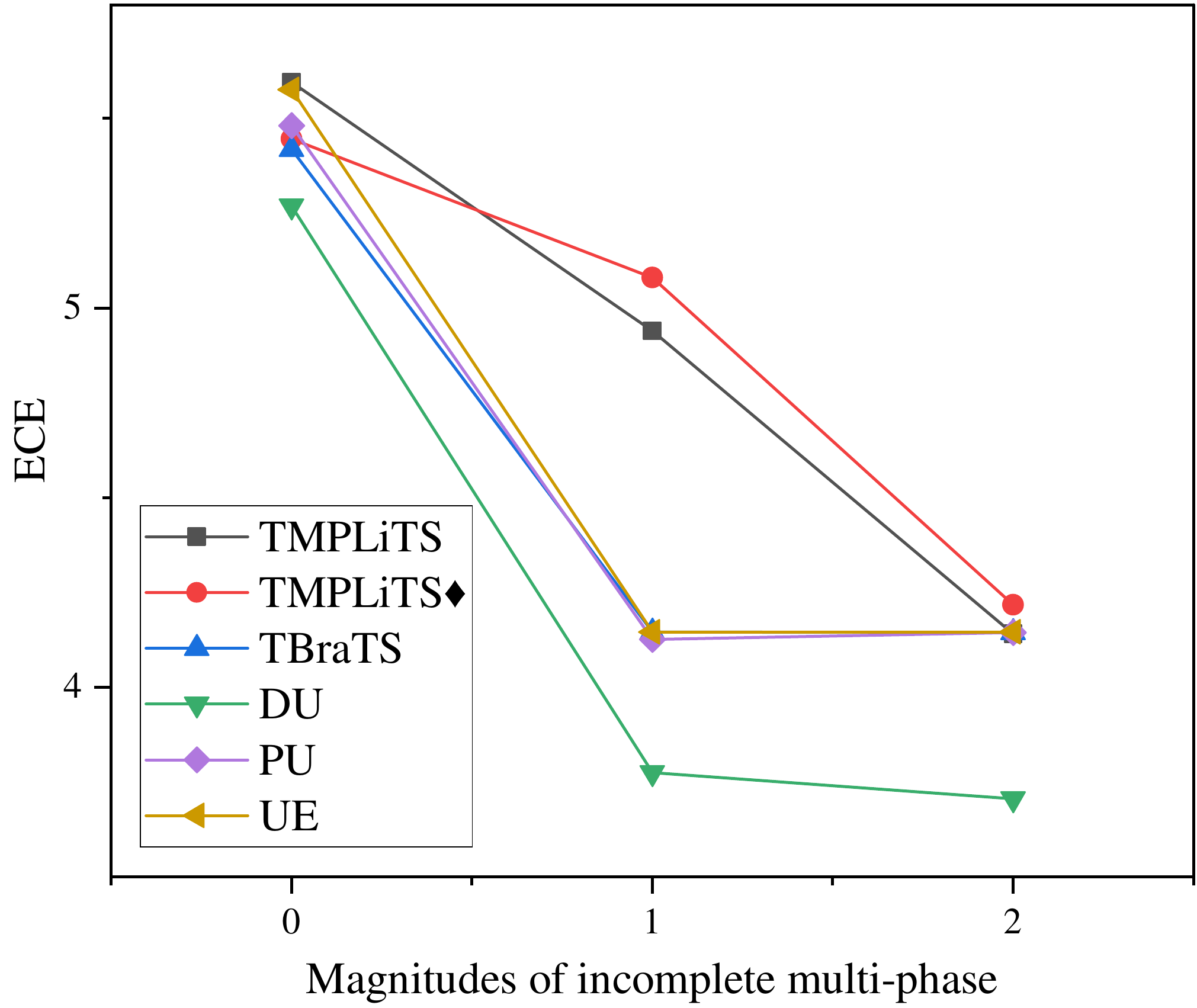}
		\end{minipage}%
	}
	\caption{The quantitative comparisons against the three perturbations with the various magnitudes.} 
	\label{fig:cmp_sota}
	\vspace{-3mm}
\end{figure*}



\subsubsection{Evaluation of Reliability on $\mathcal{T}_{\text{\textit{In}}, \text{\textit{Re}}}$} 
To verify the reliability of TMPLiTS, we conduct the experiments by simulating the perturbed scenarios. 
Inspired by~\cite{zou2022tbrats}, 
Gaussian noise $\text{\textit{Re}}_\text{noise}$, Gaussian blur $\text{\textit{Re}}_\text{blur}$, and incomplete multi-phase $\text{\textit{Re}}_\text{miss}$ are considered as the perturbed scenarios $\text{\textit{Re}} = \{ \text{\textit{Re}}_\text{noise},  \text{\textit{Re}}_\text{blur}, \text{\textit{Re}}_\text{miss} \}$.

To conduct the simulation comprehensively, we control the different magnitudes of perturbations
via the variance coefficients $\sigma^{2}_{n}$, $(\sigma^{2}_{b}, k)$ and incomplete multi-phase $\Omega$ with various constants, respectively.
Here, the variance coefficients $\sigma^{2}_{n} = \{ 0.03, 0.05, 0.1, 0.2 \}$ and $(\sigma^{2}_{b}, k) = \{ (10, 9), (20, 9), (10, 13), (20, 23)\}$ of $\text{\textit{Re}}_\text{noise}$ and $\text{\textit{Re}}_\text{blur}$ are utilized to perturb the multi-phase CECT images. 
The number of missing phases $\Omega$ is set as 1 and 2 to generate the different magnitudes of $\text{\textit{Re}}_\text{miss}$, separately.

We report the performances of trained models against these perturbed scenarios on 
$\mathcal{T}_{\text{\textit{In}}, \text{\textit{Re}}}$.
As shown Fig.~\ref{fig:cmp_sota}, TMPLiTS and its variant present the superior performances than the MPLiTS-based methods in terms of the quantitative metrics.
The performances of some MPLiTS-based methods, 
such as CLSTM and MAML, drop rapidly with the incremental magnitudes of perturbations.
It is noteworthy that SA-URI is more robust than MPLiTS-based methods and some lightweight uncertainty-based methods, such as PU and DU.
The reason is that the capability of complicated model to extract the non-linear representations might be against the slight perturbations based on linear functions.
Compared with the other uncertainty-based methods, 
we can observe that the performance of our method is degraded slowly against the perturbations,
while the qualitative results are visualized in Fig.~\ref{fig:cmp_vis}.

These facts verify that the reliability of TMPLiTS and its superiority compared with other uncertainty-based methods.
It is primarily because the procedure of MEMS is based on theoretical guarantees for TMPLiTS.
MEMS reliably fuses the complementary multi-expert opinions from multi-phase to infer the trustworthy results, 
leading to the substantial improvement. 



\begin{figure*}[!t]
	\centering
	\includegraphics[width=\textwidth]{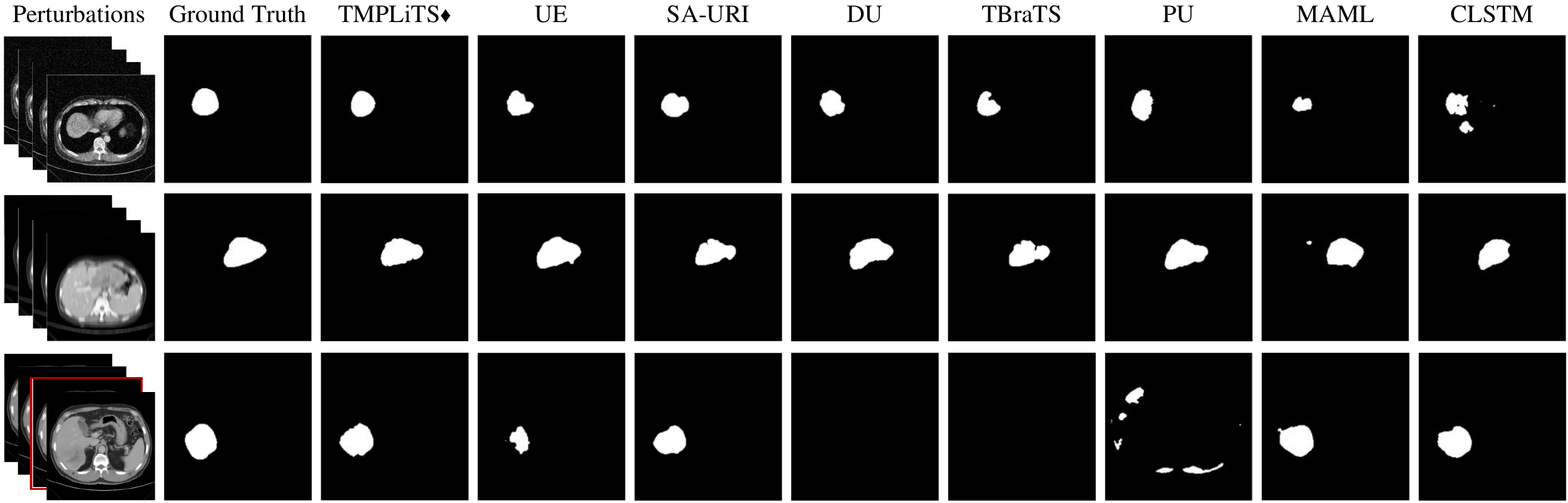}
	\caption{The qualitative comparisons against the three perturbations.
	The first column illustrates the three examples of the difference perturbations, 
	including $\text{\textit{Re}}_\text{noise}$, $\text{\textit{Re}}_\text{blur}$, and $\text{\textit{Re}}_\text{miss}$.
	The variance coefficients of these perturbations are $\sigma^{2}_{n} = 0.1$, $(\sigma^{2}_{b}, k) = (10, 13)$, and $\Omega=1$, respectively.
	The results illustrate the reliability of TMPLiTS qualitatively.
} 
	\label{fig:cmp_vis}
	\vspace{-4mm}
\end{figure*}

\begin{table*}[!t]
	\definecolor{whitesmoke}{RGB}{239,246,255}
	\centering
	\caption{Ablation studies of TMPLiTS. 
		The average results and standard deviations are both reported for five-fold cross-validation.
		``w/ $F^{s}(\cdot)$'' denotes TMPLiTS with independent feature extractors.
		``w/o MEMS'' presents TMPLiTS without MEMS, while an average operation is utilized to replace the pixel-wise reduced Dempster's combination.
	}\label{tab:ablation}
	\begin{tabular}{cc|ccc|ccc|ccc}
		\Xhline{1pt} & & & & & & & & & & \\[-3mm] 
		\multicolumn{2}{c|}{\multirow{2}{*}{Methods}} & \multicolumn{3}{c|}{w/ $F^{s}(\cdot)$} & \multicolumn{3}{c|}{ w/o MEMS} & \multicolumn{3}{c}{ TMPLiTS} \\
		\multicolumn{2}{c|}{}                          & DGS  & UEO &  ECE  & DGS  & UEO  & ECE & DGS & UEO & ECE \\
		\hline 
		\rowcolor{whitesmoke}
		\multicolumn{2}{c|}{$\mathcal{T}_{\text{\textit{In}}, \text{\textit{Va}}}$}                & \textbf{82.60}(1.68)   & 82.45(1.69)    & 5.45(0.13)   & 80.39(0.70)   & 81.46(0.75)    & 5.44(0.12)  & 81.49(1.74)    & \textbf{82.95}(1.58)   & \textbf{5.60}(0.07) \\
		\multicolumn{2}{c|}{$\mathcal{T}_{\text{\textit{Ex}}, \text{\textit{Va}}}$}                & \textbf{79.20}(0.86)    & \textbf{80.10}(0.53)   & \textbf{5.63}(0.06)   & 77.44(1.02)   & 78.13(1.01)    & 5.50(0.11)  & 78.18(0.82)   & 78.78(0.93)    & 5.54(0.06)\\
		\hline 
		\rowcolor{whitesmoke}
		$\mathcal{T}_{\text{\textit{In}}, \text{\textit{Re}}_\text{noise}}$              & $0.03 $               & \textbf{82.12}(1.86) & \textbf{82.22}(1.83) &\textbf{5.31}(0.11) & 72.98(1.48) & 72.87(1.49) & 4.91(0.19) & 79.19(1.26) & 79.75(1.41) & 5.08(0.11)   \\
		$\mathcal{T}_{\text{\textit{In}}, \text{\textit{Re}}_\text{noise}}$              & $0.05 $               & \textbf{73.90}(2.21) & \textbf{76.69}(2.00) &\textbf{5.01}(0.24) & 69.80(1.28) & 69.78(2.35) & 4.84(0.17) & 73.00(2.08) & 68.90(2.31) & 4.90(0.08)   \\
		\rowcolor{whitesmoke}
		$\mathcal{T}_{\text{\textit{In}}, \text{\textit{Re}}_\text{noise}}$              & $0.10 $               & \textbf{57.00}(2.77) & \textbf{63.84}(3.41) & 4.54(0.17) &  44.80(1.52) & 60.43(2.40) &4.42(0.24) & 53.19(6.16) & 57.26(2.73) & \textbf{4.72}(0.14)   \\
		\hline 
		$\mathcal{T}_{\text{\textit{In}}, \text{\textit{Re}}_\text{blur}}$              & $(10, 9) $               &\textbf{79.93}(1.35) &\textbf{81.95}(2.13) &\textbf{5.24}(0.20) &76.09(1.94) &76.00(2.96) &5.06(0.05) &78.30(1.68) &79.96(1.08) &5.10(0.17)   \\
		\rowcolor{whitesmoke}
		$\mathcal{T}_{\text{\textit{In}}, \text{\textit{Re}}_\text{blur}}$              & $(20, 9) $               &\textbf{79.76}(1.37) &\textbf{81.85}(2.14) &\textbf{5.24}(0.20) &75.95(1.93) &75.86(1.96) &5.06(0.05) &78.16(1.72) &79.83(1.10) &5.09(0.17)   \\
		$\mathcal{T}_{\text{\textit{In}}, \text{\textit{Re}}_\text{blur}}$              & $(10, 13) $              &69.78(3.88) &\textbf{75.94}(1.98) &\textbf{4.91}(0.23) &67.66(3.21) &67.78(3.24) &4.58(0.08) &\textbf{70.31}(3.49) &73.24(2.26) &4.84(0.21)   \\
		\rowcolor{whitesmoke}
		$\mathcal{T}_{\text{\textit{In}}, \text{\textit{Re}}_\text{blur}}$              & $(20, 23) $               &\textbf{46.77}(3.66) &\textbf{54.02}(2.71) &\textbf{4.32}(0.17) &38.95(2.84) &43.51(3.77) &4.10(0.21) &44.69(3.35) &46.18(3.28) &4.26(0.07)   \\
		\hline 
		$\mathcal{T}_{\text{\textit{In}}, \text{\textit{Re}}_\text{miss}}$              & $1 $               &\textbf{73.89}(1.94) &\textbf{76.91}(1.05) &\textbf{5.08}(0.31) &70.71(1.12) &74.25(1.23) &4.85(0.20) &71.89(1.97) &75.91(1.93)  &4.94(0.19)   \\
		\rowcolor{whitesmoke}
		\hline 
		\rowcolor{whitesmoke}
		\multicolumn{2}{c|}{Memory footprint} &\multicolumn{3}{c|}{ 205.3 MB} &\multicolumn{3}{c|}{ \textbf{51.7} MB} &\multicolumn{3}{c}{\textbf{51.7} MB}   \\
		\Xhline{1pt}
	\end{tabular}
\vspace{-3mm}
\end{table*}


\subsection{Ablation Studies}

\subsubsection{Analysis of Shared Feature Extractor}
To investigate the characteristic of shared feature extractor $F(\cdot)$, 
we model a variant of TMPLiTS (TMPLiTS$\blacklozenge$) with the independent feature extractors.
As reported in Table~\ref{tab:ablation}, the architecture of independent feature extractors improves the expected performances in terms of validity and reliability.
It means that the discriminative representations of each phase are captured by the corresponding feature extractors. 
However, the additional parameters of model are inevitable intuitively, 
where the memory footprints of TMPLiTS with $F(\cdot)$ and $F^{s}(\cdot)$ are 51.7 MB and 205.3 MB, respectively. 
Thus, the architecture of shared feature extractor $F(\cdot)$ can be seen as a trade-off between accuracy and complexity.

\subsubsection{Effect of Multi-Expert Mixture Scheme}
To clarify the effect of MEMS, a variant of TMPLiTS without MEMS is designed, 
where the pixel-wise reduced Dempster's combination is replaced by average operation to obtain the joint opinion $\mathcal{M}_{i,j}$ from the multi-phase opinions.
As reported in Table~\ref{tab:ablation}, the performance of TMPLiTS without MEMS is decreased obviously compared with TMPLiTS.
The pixel-wise DST-based combination rule outperforms the average operation about 10\% and 6\% against $\text{\textit{Re}}_\text{noise}$ and $\text{\textit{Re}}_\text{blur}$ in terms of DGS, respectively.
It reveals the effect of MEMS to achieve the reliable fusion procedure for multi-phase opinions.

\begin{figure}[!t]
	\centering
	\includegraphics[width=0.90\columnwidth]{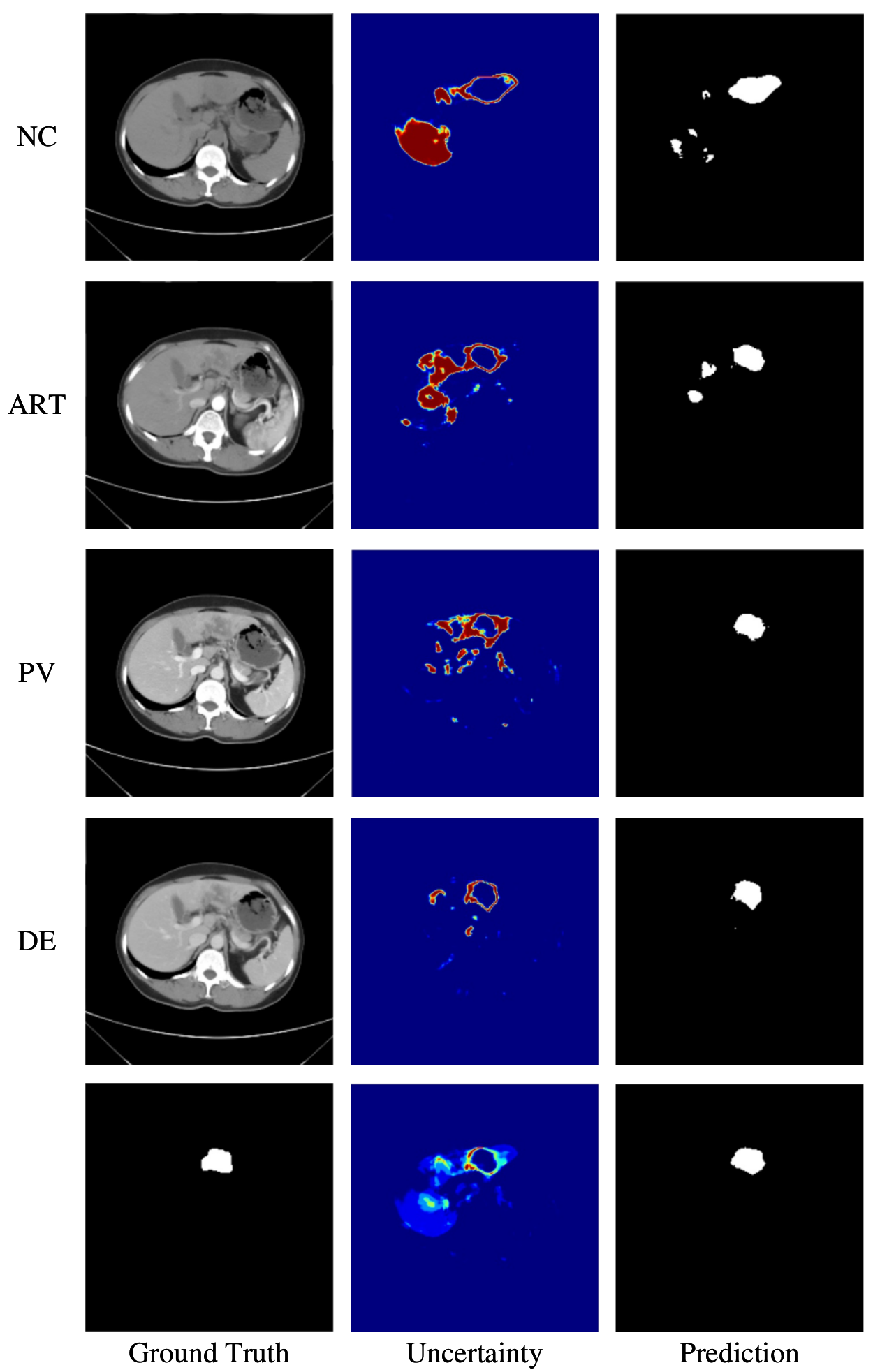}
	\caption{Visualization of trustworthy results generated by TMPLiTS. 
		The first four rows illustrate the multi-phase results, 
		including the CECT images, heatmaps of uncertainty, and predictions.
		The ground truth and fused results are visualized in the last row.
}
	\label{fig:interpretation}
	\vspace{-3mm}
\end{figure}

\begin{figure}[!t]
	\centering
	\subfigure[Correlation on $\mathcal{T}_{\text{\textit{In}}, \text{\textit{Va}}}$, where correlation coefficient is 0.969.]{
		\begin{minipage}[t]{1\columnwidth}
			\centering
			\hspace{-4.8mm}
			\includegraphics[width=0.7\linewidth]{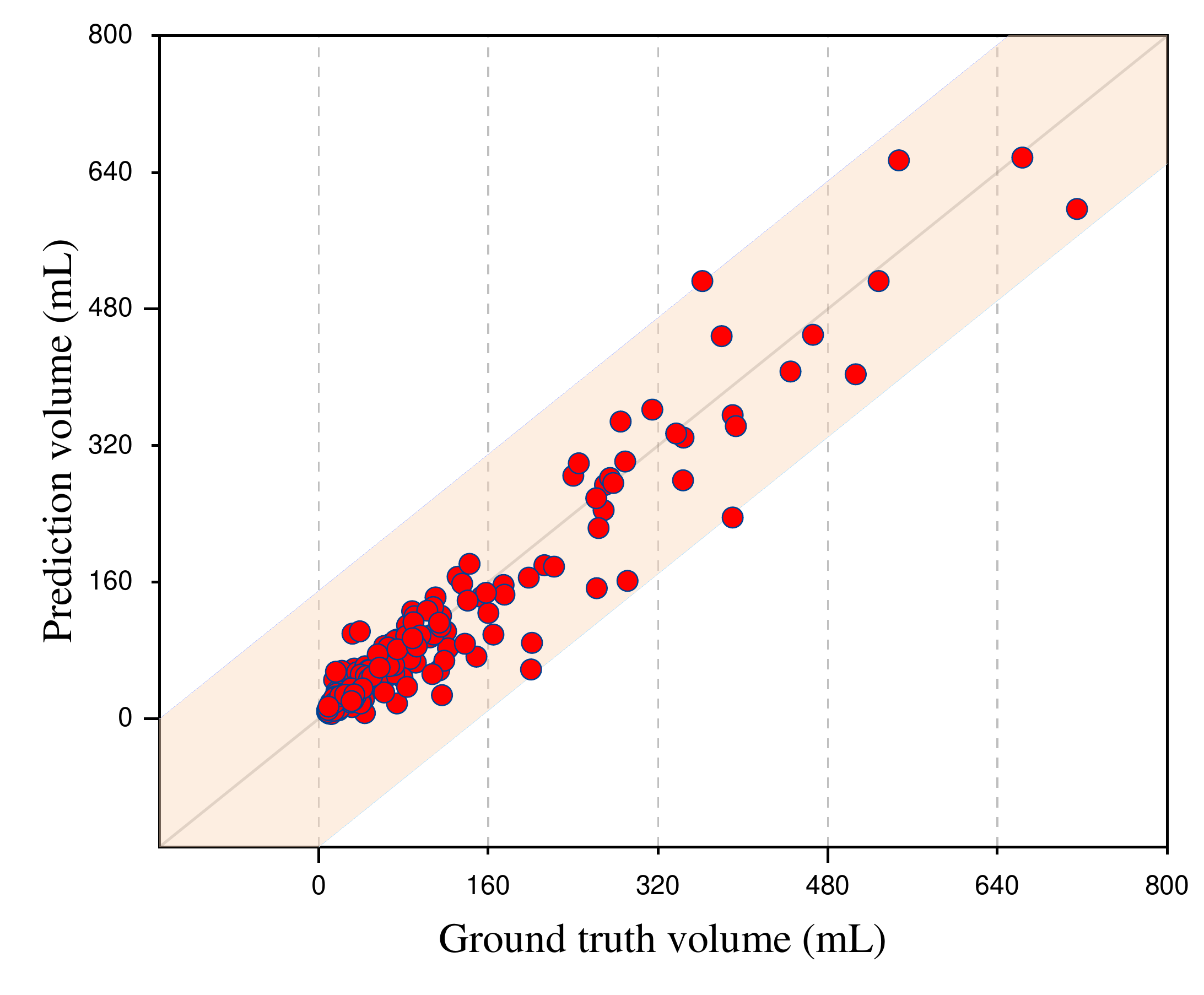}
		\end{minipage}%
	}
	\hspace{2mm}
	\subfigure[Correlation on $\mathcal{T}_{\text{\textit{Ex}}, \text{\textit{Va}}}$, where correlation coefficient is 0.961.]{
		\begin{minipage}[t]{1\columnwidth}
			\centering
			\hspace{-4.8mm}
			\includegraphics[width=0.7\linewidth]{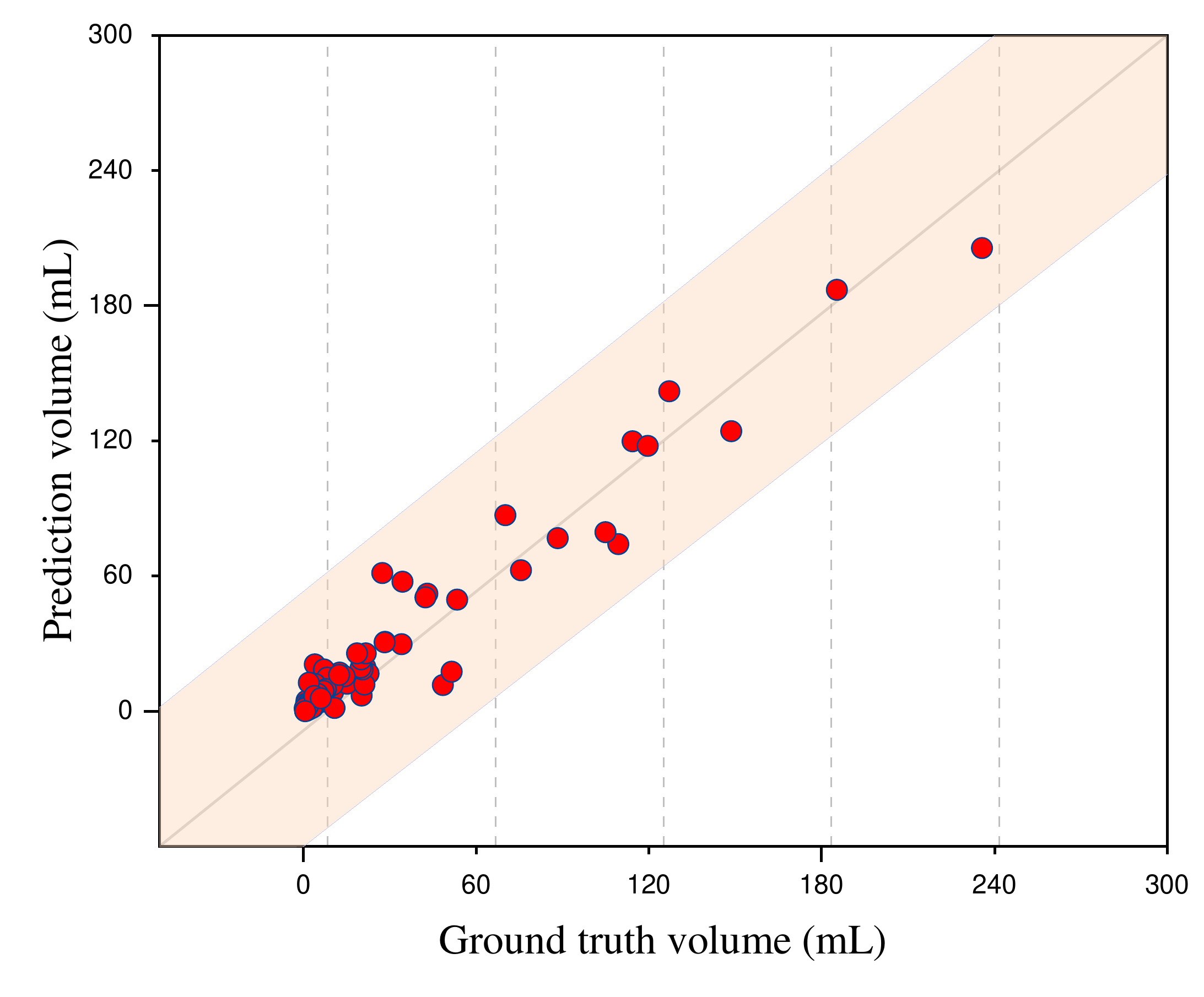}
		\end{minipage}%
	}
	\caption{The correlation plots between TMPLiTS and clinicians for the tumor volumes of each patient.}
	\label{fig:dis_correlation}
	\vspace{-2mm}
\end{figure}

\subsection{Discussion}

\subsubsection{Interpretation of multi-phase trustworthy results}
Clinically, during the hepatic arterial phase, lesions would be greatly enhanced, 
and become iso- or hypodense in the portal venous phase, which is a sensitive and specific characteristic for diagnosing HCC~\cite{mathieu1985portal}.
The delayed phase is helpful in the representation of hepatocellular carcinoma by depiction of a capsule or mosaic pattern~\cite{lim2002detection}.
In order to interpret the multi-phase trustworthy results of TMPLiTS qualitatively, 
we visualize the uncertainty and the corresponding predictions, as shown in Fig.~\ref{fig:interpretation}. 
The high values of heatmap regions represent the high magnitudes of uncertainty, 
which could suggest the clinician to pay more attention to these regions of each phase.
In the non-contrast phase, the initial regions of lesions are predicted.
The suspicious regions are highlighted by the heatmap, which should be focused carefully.
With the enhancement of hepatic parenchyma during the other phases, 
the regions of lesions have shrunk to the precise regions. 
Finally, the heatmap of the fused uncertainty describes the reliability of prediction, 
such as incomplete capsule and suspicious patterns.
These observations reveal the instructive assistance of trustworthy results provided by TMPLiTS.

\subsubsection{Analysis of correlation}


To assess the overall agreement between TMPLiTS and clinicians on 
$\mathcal{T}_{\text{\textit{In}}, \text{\textit{Va}}}$ and $\mathcal{T}_{\text{\textit{Ex}}, \text{\textit{Va}}}$, the analysis of correlation is introduced. 
The tumor volumes of each patient case are converted from pixel-wise results based on spacing.
As shown in Fig.~\ref{fig:dis_correlation}, the correlation between the prediction and ground truth in terms of volumes are plotted,
where the correlation coefficients on $\mathcal{T}_{\text{\textit{In}}, \text{\textit{Va}}}$ and $\mathcal{T}_{\text{\textit{Ex}}, \text{\textit{Va}}}$ are 0.969 and 0.961, respectively.
It can be observed that the correlation points of prediction and ground truth volumes are closed among the diagonal compactly.
Meanwhile, the performance of TMPLiTS on the various magnitudes of tumor volumes is acceptable.
These facts indicate that the predicted results obtained by the proposed method are in high agreement with the real results.
 
\subsubsection{Reliability of model}

TMPLiTS guarantees the reliability of model by the theoretical supports.
However, the model with high complexity, such as SA-URI, is also against the slight perturbations, as shown in Fig.~\ref{fig:cmp_sota}.
The insight behind the results is that such slight perturbations based on linear functions could be tolerant by the complicated feature extraction.
Compared with the high complicated model, 
the advantage of TMPLiTS is to achieve the reliable liver tumor segmentation and uncertainty estimation efficiently.
 
 
\subsubsection{Limitation}

Although TMPLiTS has achieved a satisfactory performance, 
there are some limitations which could be further improved. 
First, the high magnitudes of perturbations need to be further considered in the procedure of modeling. 
For instance, the collection of the multi-phase CECT volumes with two or more incomplete phases is a common case, 
while the existing methods might not present the reliable performance towards the setting of missing more incomplete phases.
Second, the mechanism of TMPLiTS behind the balance between accuracy and complexity could be analyzed,
which might enrich the properties of TMPLiTS.
Finally, the generalization in the other medical images could be further explored,
since the similar insight of modeling the multi-phase complementarity might be introduced by TMPLiTS.


\section{Conclusion}
In this paper, we propose a novel unified framework via evidence-based uncertainty for MPLiTS, 
termed as trustworthy multi-phase liver tumor segmentation (TMPLiTS), 
where the reliable segmentation and uncertainty estimation are jointly conducted on the multi-phase CECT images.
The complementary multi-phase information are fused based on multi-expert mixture scheme (MEMS) with theoretical guarantees, 
while the interpretable liver tumor segmentations of each phase are obtained independently,  
assisting the diagnosis of liver cancer clinically.
Experimental results demonstrate that TMPLiTS achieves the competitive performance compared with the state-of-the-art methods. 
Moreover, the robustness of TMPLiTS is verified against the three perturbed scenarios.

In future work, we will extend the insight of TMPLiTS to the other medical images,
while the architecture and mechanism of TMPLiTS will be further explored to alleviate the high magnitudes of perturbations, such as two or more incomplete phases.

\vspace{-2mm}
\bibliographystyle{IEEEtranTIE}
\bibliography{20221222_tmi}

\end{document}